\documentclass[10pt]{article}
\usepackage{amssymb,amsmath,amsthm}
\usepackage[english]{babel}
\usepackage{booktabs}
\usepackage{caption}
\usepackage{enumitem}
\usepackage{geometry}
\usepackage{graphicx}
\usepackage[colorlinks,citecolor=blue,urlcolor=blue]{hyperref}
\usepackage[utf8]{inputenc}
\usepackage{makecell}
\usepackage{mathtools}
\usepackage[round]{natbib}
\usepackage[labelformat=simple]{subcaption}
\usepackage{url}

\usepackage{cleveref}

\usepackage{stevemath}

\geometry{verbose,tmargin=0.7in,bmargin=0.7in,lmargin=1in,rmargin=1in}
\setlength{\parskip}{\medskipamount}
\setlength{\parindent}{0pt}
\setcounter{secnumdepth}{3}
\captionsetup{font=small}

\newcommand{\unif}{{\textup{unif}}}
\newcommand{\Gammatw}{\widetilde{\Gamma}}

\begin{document}

\title{The uniform general signed rank test and its design sensitivity}
\author{Steven R. Howard \and Samuel D. Pimentel}
\maketitle

\begin{abstract}
A sensitivity analysis in an observational study tests whether the qualitative
  conclusions of an analysis would change if we were to allow for the
  possibility of limited bias due to confounding. The design sensitivity of a
  hypothesis test quantifies the asymptotic performance of the test in a
  sensitivity analysis against a particular alternative. We propose a new,
  nonasymptotic, distribution-free test, the uniform general signed rank test,
  for observational studies with paired data, and examine its performance under
  Rosenbaum's sensitivity analysis model. Our test can be viewed as adaptively
  choosing from among a large underlying family of signed rank tests, and we
  show that the uniform test achieves design sensitivity equal to the maximum
  design sensitivity over the underlying family of signed rank tests. Our test
  thus achieves superior design sensitivity, indicating
  it will perform well in sensitivity analyses on large samples. We support this
  conclusion with simulations and a data example, showing that the advantages of
  our test extend to moderate sample sizes as well.
\end{abstract}

\section{Introduction}
In the empirical study of causal effects, the use of standard statistical
hypothesis tests, along with their concomitant $p$-values and confidence
intervals, accounts only for the uncertainty introduced by sampling
variability. However, in an observational study where treatment assignment has
not been randomized, hidden biases due to unobserved confounding can be much
larger than sampling uncertainty. As such, standard hypothesis tests may fail to
be convincing if they assume the study is free of hidden bias, as a randomized
experiment would be. A sensitivity analysis addresses this problem by formally
testing whether the qualitative conclusions of a standard procedure would change
if hidden bias of a certain magnitude were present
\citep{rosenbaum_observational_2002}.

When an investigator plans to run a sensitivity analysis, the choice of test
statistic may no longer hinge solely on traditional measures such as Pitman
efficiency \citep{pitman1948lecture, nikitin_efficiency_2011}. In particular, an investigator may seek a test statistic that is
least sensitive to hidden bias, and thereby most likely to successfully
distinguish treatment effects from bias, rather than one that is most likely to
detect treatment effects in the absence of hidden bias. Design sensitivity is
one way to quantify this idea for a particular test statistic
\citep{rosenbaum_design_2004, rosenbaum_design_2010}. Design sensitivity
complements Pitman efficiency and other conventional means of comparing tests.

\citet{rosenbaum_design_2010-1} shows that a test statistic that focuses on a
subgroup strongly affected by treatment may achieve superior design sensitivity
compared to a statistic that uses all observations. \citet{rosenbaum_exact_2012}
shows that a particular test using only the observations of largest magnitude,
Noether's test \citep{noether_simple_1973}, has excellent design sensitivity but
poor power against small effects. Rosenbaum then proposes an adaptive test in
which the $p$-value is the minimum $p$-value from two competing test statistics,
corrected for multiple testing using the joint distribution of these two test
statistics. This adaptive test is shown to get the best of both worlds, good
power in small samples as well as high design sensitivity. In fact, the adaptive
test attains the maximum design sensitivity of its two component
tests. \citet{rosenbaum_adaptive_2017} similarly propose an adaptive test that
chooses from the better of two test statistics, one focused on a subgroup and
one examining the entire population, with correction for multiple testing.

We examine a different test for paired data that chooses
adaptively from a large, highly dependent family of test
statistics. We control for multiple testing using a uniform concentration bound
for the stochastic process formed by this family of test statistics. This
permits choosing among as many test statistics as we have
observations, while achieving nonasymptotic, distribution-free error
control. Our theoretical results characterize how this test achieves excellent
design sensitivity, 
which can be infinite against normal
alternative distributions, such that no matter the strength of confounding, the test will
reject the null hypothesis of no effect with probability approaching one asymptotically. We are not aware of
previous discussion of such behavior.

\section{Background and notation}\label{sec:background}

\subsection{Rosenbaum's sensitivity analysis model for matched pairs}

We focus attention on observational studies in which study subjects receiving a
 treatment condition are paired to similar subjects receiving a control condition for
 analysis.
 For instance, in the study of fish consumption and mercury concentration in \Cref{sec:data}, respondents in a  
 nutrition survey who consume 15 servings of fish per month are each paired to a
 respondent who consumes two or fewer servings per month, but has similar
 demographic attributes and smoking habits. 
We 
assume $n$ such pairs have been constructed. The
subjects in the $i$th pair have control potential outcomes $R_{Cij}$, treatment
potential outcomes $R_{Tij}$, and treatment indicators $Z_{ij}$ for $j = 1, 2$
and $i \in [n]$ where $[n]$ is the set $\{1, \ldots, n\}$ and where $Z_{i1} + Z_{i2} = 1$ for all pairs $i$ by construction. Let $\Fcal$ be the $\sigma$-field generated by all of the
potential outcomes $(R_{Cij}, R_{Tij})_{i \in [n], j \in [2]}$.

A sensitivity analysis tests whether a positive conclusion of our
study, specifically a rejection of Fisher's sharp null hypothesis of no effect of treatment for any individual, holds up under the possibility of
limited confounding \citep{rosenbaum_observational_2002}. To operationalize this notion, for each $\Gamma \geq 1$ we
define the sensitivity analysis null hypothesis $H_0(\Gamma)$ to assert, firstly, that
that
$R_{Ti1} = R_{Ci1}$ and $R_{Ti2} = R_{Ci2}$ for all $i \in [n]$, which is
  Fisher's sharp null, and, secondly, that
conditional on $\Fcal$, treatment assignments are independent between
  pairs, and that treatment probabilities within each pair $i$ are related by the
  following odds ratio bounds:
  \begin{align*}
    \frac{1}{\Gamma} \leq
    \frac{\P\condparen{Z_{i1} = 1}{\Fcal} / \P\condparen{Z_{i1} = 0}{\Fcal}}
         {\P\condparen{Z_{i2} = 1}{\Fcal} / \P\condparen{Z_{i2} = 0}{\Fcal}}
    \leq \Gamma.
  \end{align*}
At $\Gamma = 1$, this specifies that, within each pair, both units have the same
(conditional) probability of treatment. This is equivalent to assuming that treatments are
assigned completely at random within pairs, 
the standard assumption that
leads to valid randomization inference in the absence of unmeasured confounding variables
\citep[\S 3.2]{rosenbaum_observational_2002}.  When $\Gamma > 1$, treatment probabilities may differ within a pair in ways
we cannot observe. If differences in treatment probabilities within pairs are correlated with differences in potential outcomes,  bias will arise
in estimates of the effect of treatment, although the magnitude of
such bias is limited by the sensitivity parameter $\Gamma$.  An equivalent model, described in \citet[\S 4.2]{rosenbaum_observational_2002}, assumes the presence of an unmeasured confounding variable whose relationship to treatment is limited by parameter $\Gamma$, and whose relationship to the outcome variable is unrestricted, allowing it to be arbitrarily strong.  Additional assumptions about the strength of the confounder-outcome relationship can be incorporated by an amplification of the sensitivity analysis as in \citet{rosenbaum2009amplification}.  

Write $R^\obs_{ij} = Z_{ij} R_{Tij} + (1 - Z_{ij}) R_{Cij}$ for the
observed outcomes and $Y_i = (Z_{i1} - Z_{i2})(R^\obs_{i1} - R^\obs_{i2})$ for
the observed treated-minus-control difference in the $i$th pair.  Under the null
$H_0(\Gamma)$ we know that $Y_i \in \left\{\abs{R_{Ci1} - R_{Ci2}}, -\abs{R_{Ci1} - R_{Ci2}}\right\}$ and
\begin{align*}
\frac{1}{1 + \Gamma} \leq
\text{pr}\condparen{Y_i > 0}{\Fcal}
\leq \frac{\Gamma}{1 + \Gamma}.
\end{align*}
Here and in all future statements, we implicitly condition on the event
$\brace{Z_{i1} + Z_{i2} = 1 \text{ for all }i \in [n]}$, omitting it from the notation, and we also assume $\text{pr}(Y_i = 0) = 0$ for all $i$.  
 In words, $H_0(\Gamma)$ asserts that there is no effect of treatment for
any individual, but that treatment probabilities may differ within a pair in ways we cannot observe.
   Again, $\Gamma = 1$
rules out unobserved confounding,
since $\text{pr}\condparen{Y_i > 0}{\Fcal, Z_{i1} + Z_{i2} = 1} =
1/2$; if $\Gamma > 1$ and potential outcomes are correlated with unobserved differences in treatment probabilities, bias due to confounding will be present. 

This sensitivity analysis model provides a method to conduct hypothesis
tests that control type I error under limited confounding, but leaves open the choice of test
statistic.To judge the relative benefits of different test statistics,
we calculate power for various test statistics
in a test of the sensitivity analysis null $H_0(\Gamma)$. 
As with all power
calculations, we must choose a particular alternative hypothesis of interest. We define an alternative hypothesis $H_1(G)$ for a
distribution $G$ over $\R$, motivated by the following scenario, in which we hope to reject the null hypothesis with high power.  Firstly, we let $R_{Cij}$ be an independent draw from some distribution $F$, for each
  $i \in [n], j = 1, 2$.  Secondly, we specify $R_{Tij} = R_{Cij} + \tau_i$ for all $i,j$, where $\tau_i \in \R$ is drawn
  from some fixed distribution for each $i \in [n]$, and is constant within each
  pair.  Finally, we require
  $\text{pr}\condparen{Z_{i1} = 1, Z_{i2} = 0}{\Fcal} = \text{pr}\condparen{Z_{i1} = 0, Z_{i2}
    = 1}{\Fcal} = 1/2$, with treatment (conditionally) independent between
  pairs.
In words,  there is a constant treatment effect within pairs and no hidden bias
due to unequal treatment probabilities.  The alternative hypothesis $H_1(G)$ is
characterized by the induced distribution $G$ of the independent and identically distributed pair
differences $Y_i = (Z_{i1} - Z_{i2})(R_{Ci1} - R_{Ci2}) + \tau_i$. Because there
is no hidden bias under $H_1(G)$, the mean of this distribution, when the mean exists, is the
average treatment effect $E\left(\tau_i\right)$. We explore the performance of the test when
$\tau_i =  \tau$  is constant across pairs, so that $G$ is the distribution of
$R - R' + \tau$, where $R$ and $R'$ are independent draws from $F$; this
distribution is symmetric about $\tau$. We also explore performance under a rare effects
model in which $\tau_i$ is zero for most pairs and equal to some large value for
a small proportion of pairs. In this case, $G$ is a mixture with most mass
placed on some distribution symmetric about zero, and the remaining mass on a
copy of the distribution shifted to the right.

When $\Gamma$ is sufficiently small, power tends to approach one as $n$ grows large under either alternative described above, but for most test statistics there is some threshold value of $\Gamma$ above which power instead tends to zero.  This threshold is known as the design sensitivity and depends not only on the alternative but on the test statistic \citep[\S 14]{rosenbaum_design_2010}.  Test statistics with large design sensitivities are desirable because they report the presence of a treatment effect with high probability under the alternative, even allowing for large amounts of unmeasured confounding.

Rosenbaum's sensitivity analysis model is only one of many possible
approaches.   \citet{fogarty2019studentized} discusses the performance of the Rosenbaum procedure for a weak null hypothesis rather than a sharp null. \citet{cornfield_smoking_1959} and \citet{ding_sensitivity_2016} develop sensitivity bounds on a relative risk using sensitivity parameters defined by the relative risk of exposure, or outcome, in categories of an unobserved confounding variable, \citet{gilbert_sensitivity_2003} give an approach based on logistic modeling, and \citet{robins_sensitivity_2000} and \citet{yu_sensitivity_2005} consider methods appropriate for time-varying treatments and doses of treatment respectively.

\subsection{Sensitivity analysis with general signed rank statistics}
\label{sec:general_sign_rank}

\begin{figure}
  \includegraphics{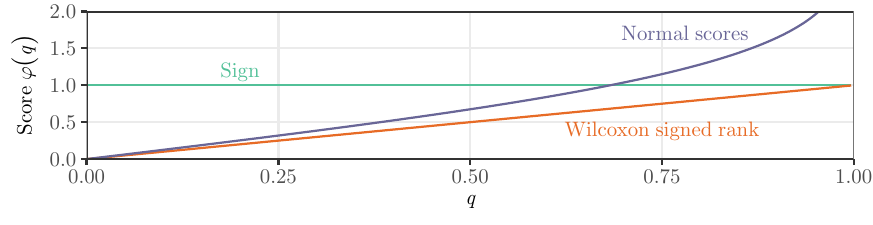}
  \caption{The three score functions $\varphi(q)$ used in this
    paper. \label{fig:score_fns}}
\end{figure}

Let $(Y_{(i)})$ denote the pair differences $(Y_i)$ ordered by absolute value,
so that $\abs{Y_{(1)}} \leq \dots \leq \abs{Y_{(n)}}$. A
{general signed rank statistic} has the form
\begin{align*}
  T_n = \sum_{i=1}^n \varphi\pfrac{i}{n + 1} \indicator{Y_{(i)} > 0}
\end{align*}
for some {score function} $\varphi: (0,1) \to [0,\infty)$. See \citet[\S
6.10]{lehmann_testing_2005} and references therein for pointers to the long
history of general signed rank tests; \citet{rosenbaum_design_2010-1} discusses
their use in the context of sensitivity analysis. The score function $\varphi$
allows us to place more or less weight on pairs with relatively larger or
smaller observed absolute differences. We will focus on the three score
functions underlying the sign test, the Wilcoxon signed rank test, and the
normal scores test, although we discuss an additional, redescending score
function in \cref{sec:redescending}.  The {sign
  test} uses $\varphi(q) =1$, so that all pairs contribute equally, regardless
of rank. In this case $T_n$ simply counts the number of pairs in which the
treated unit had a higher outcome.  The {Wilcoxon signed rank test} is
equivalent to $\varphi(q) = q$ \citep{rosenbaum_design_2010-1}, so that pairs
with larger effects contribute more to the test statistic.  Finally, the {normal
  scores} test uses $\varphi(q) = \Phi^{-1}\lbrace(1 + q) / 2\rbrace$, where
$\Phi^{-1}$ is the standard normal quantile function,
$\text{pr}\lbrace Z \leq \Phi^{-1}(q)\rbrace = q$ when $Z \sim
\Normal(0,1)$. This score function is the quantile function of the absolute
value of a standard normal random variable, and this general signed rank test
has high power when outcomes are drawn from a normal distribution \citep[\S
6.9-6.10]{lehmann_testing_2005}.  All three score functions are illustrated in
\cref{fig:score_fns}.

The sensitivity analysis null hypothesis $H_0(\Gamma)$ does not specify a single
distribution for the observables $(Y_i)$, but it does allow for easy construction of a single worst-case
distribution for the test statistic $T_n$ in a one-sided test that
rejects for $T_n$ sufficiently large, that is, a distribution that
maximizes $\text{pr}\condparen{T_n \geq a}{\Fcal}$ for any threshold $a$, among
all distributions in $H_0(\Gamma)$. This worst-case distribution has the $n$
signs $(\indicator{Y_i > 0})$ independent with
$\text{pr}\condparen{Y_i > 0}{\Fcal} = \Gamma / (1 + \Gamma)$ for all $i \in [n]$
\citep[\S 4.3]{rosenbaum_observational_2002}. Write $c_{\alpha,n}(\Gamma)$
for the $1-\alpha$ quantile of $T_n$ under this worst-case distribution,
so that $c_{\alpha,n}(\Gamma)$ is the critical value of a one-sided,
level-$\alpha$ sensitivity analysis testing $H_0(\Gamma)$ with test statistic
$T_n$; the critical value may depend on $\Fcal$, in the case of ties. This
critical value yields a valid test of the sensitivity analysis
null hypothesis, and is not hard to approximate numerically or via the normal
distribution. In \cref{th:uniform_bound} below, we build upon these ideas to
define a uniform general signed rank test, deriving closed-form critical values
that guarantee nonasymptotic Type I error control under the sensitivity null
$H_0(\Gamma)$.

\subsection{Power of a sensitivity analysis and design sensitivity}
\label{sec:power}

The power of a one-sided, level-$\alpha$ sensitivity analysis
for a general signed rank test with statistic $T_n$ is
$\text{pr}_1\brace{T_n \geq c_{\alpha,n}(\Gamma)}$, where $\text{pr}_1(A)$ is the probability of an event $A$ under
$H_1(G)$.  The power is well-defined, since
$H_1(G)$ specifies the distribution of $T_n$ completely, and
depends on the level $\alpha$, the sample size $n$, the sensitivity parameter
$\Gamma$, the alternative distribution $G$, and the score function
$\varphi$. The {design sensitivity} \citep{rosenbaum_design_2004,
  rosenbaum_design_2010} of the test statistic $T_n$ is the value
$\Gammatw$ such that, as the sample size grows without bound, the power
of a sensitivity analysis with parameter $\Gamma$ approaches one whenever
$\Gamma < \Gammatw$ and approaches zero whenever
$\Gamma > \Gammatw$:
\begin{align*}
\lim_{n \to \infty} \text{pr}_1\ebrace{T_n \geq c_{\alpha,n}(\Gamma)} &= \left\{ \begin{array}{rl} 1, & 
  \quad 1 \leq \Gamma < \Gammatw, \\
0,&
  \quad \Gammatw < \Gamma < \infty. \end{array} \right.
\end{align*}
Formally, the design sensitivity depends on the level $\alpha$, the alternative
distribution $G$ and the score function $\varphi$. In typical examples,
including those considered below, the dependence on $\alpha$ vanishes; intuitively, the design sensitivity tends to 
delineate the point at which the worst-case mean of the test statistic under the sensitivity null $H_0(\Gamma)$ is equal to the mean under the alternative, and this is invariant to $\alpha$. It is
clear from the definition that such a value is unique, if it exists, but
existence must be proved as part of the derivation of design sensitivity, as in
our \cref{th:unif_ds}. Note also that we may have $\Gammatw = \infty$,
which means that
$\lim_{n \to \infty} \text{pr}_1\ebrace{T_n \geq c_{\alpha,n}(\Gamma)} = 1$ for all
$\Gamma \geq 1$; in words, the test has power approaching one against the given
alternative regardless of how large a sensitivity parameter $\Gamma$ is chosen.

Proposition 2 of \citet{rosenbaum_design_2010-1} gives a formula for the design
sensitivity of a general signed rank test whenever the score function $\varphi$
is piecewise continuous, nondecreasing and not identically zero:
\begin{align}
\Gammatw = \frac{\pi}{1 - \pi}, \quad 
\quad
\pi =
  \frac{\int_0^\infty \varphi\ebrace{G(y) - G(-y)} \d G(y)}{\int_0^1 \varphi(y) \d y}.
\label{eq:fixed_design_sensitivity}
\end{align}
Note that $G(y) - G(-y)$ is the CDF of $\abs{Y}$ under $H_1(G)$. We see that the
design sensitivity of a general signed rank test is determined precisely by the
aspects of $\varphi$ and $G$ captured in the quantity $\pi$. In
\cref{th:unif_ds} and \cref{th:limiting_ds}, we extend this result to
characterize the design sensitivity of our uniform general signed rank test. Our
conditions on $\varphi$, while not strictly more general, do allow for the
normal score function, in contrast to Rosenbaum's
conditions.

For the sign test, $\varphi(q) \equiv 1$, we have $\int_0^1 \varphi(y) \d y = 1$
and $\int_0^\infty \varphi\ebrace{G(y) - G(-y)} \d G(y)$ is exactly
$\text{pr}(Y > 0)$ when $Y \sim G$. Hence $\pi = \text{pr}_1(Y > 0)$
(\citealp{rosenbaum_exact_2012}, Proposition 1). Both $\pi$ and
$\widetilde{\Gamma}$ express the chance that a pair difference $Y$ gives
evidence in favor of a positive treatment effect under the alternative with no
hidden bias, as a probability and as an odds respectively.

\section{A uniform general signed rank test}\label{sec:intro_test}

We now define a general class of uniform signed rank tests that operate on a
family of related test statistics $\brace{T_n(x)}_{x \in (0,1)}$. Informally,
our test rejects when {any} test statistic in the family lies above a
corresponding modified critical value. These critical values are chosen to
correct for multiplicity by taking advantage of the structure of the family of
test statistics. The uniform nature of our test yields advantages in terms of
design sensitivity, which we describe in \cref{sec:uniform_design_sensitivity}.

For any $\varphi: (0,1) \to [0, \infty)$, define the family of test
statistics $\brace{T_n(x)}_{x \in (0,1)}$ by $T_n(x) = 0$ for
$x < 1/(n+1)$, and for $x \geq 1/(n+1)$,
\begin{align}
T_n(x)
  = \sum_{i = \ceil{(1-x)(n+1)}}^n
    \varphi\eparen{\frac{i}{n+1}} 1_{Y_{(i)} >0}
  = \sum_{i = \ceil{(1-x)(n+1)}}^n c_i 1_{Y_{(i)} >0},
  \label{eqn:ugsr_def}
\end{align}
where we have defined $c_i = \varphi\ebrace{{i}/({n+1})}$ for
convenience. For each $x$, $T_n(x)$ is a general signed rank statistic
using the truncated score function
$\varphi_x(q) = \varphi(q) \indicator{q \geq 1 - x}$, which, roughly speaking, gives weight zero to pair differences
whose absolute magnitudes lie below the $1-x$ quantile.  More precisely, it gives weight zero to pair differences below the $(1-x)(n+1)/(n-1) - 1/(n-1)$ quantile, hence the condition $T_n(x) = 0$ for $x < 1/(n+1)$.
There are $n$ distinct
nontrivial test statistics in this family, $T_n\brace{k/(n+1)}$ for
$k = 1, \dots, n$, corresponding to the partial sums
$\sum_{i=k}^n c_i\indicator{Y_{(i)} > 0}$ for $k = n, n -1, \dots, 1$. Thus the
family corresponds to a random walk with $n$ steps and step sizes determined by
the function $\varphi(\cdot)$.

Despite the generality of our construction in terms of the score
function $\varphi$, our family always consists of truncated versions of the full
test statistic. Such truncated statistics focus on subsets of the experimental
sample with large observed effects $\abs{Y_i}$. As such, our test will tend to
perform especially well against alternatives with large, rare effects.

Our uniform test will be characterized by a threshold function
$f_{\alpha,n}(x)$, the functional analogue of a critical value. Our test rejects
whenever $T_n(x) \geq f_{\alpha,n}(x)$ for any $x \in (0,1)$. As in the
fixed-sample case, there is a single worst-case distribution under $H_0(\Gamma)$
which maximizes the probability of rejection. We defer the proof of
\cref{th:worst_case} to \cref{sec:prove_lemmas},
along with most other proofs in this paper.
\begin{proposition}\label{th:worst_case}
  Fix any threshold function $f_{\alpha,n}: (0,1) \to \R_{> 0}$. Among all
  distributions in $H_0(\Gamma)$, the rejection probability
  $\text{pr}\condbrace{T_n(x) \geq f_{\alpha,n}(x) \text{ for some } x \in (0,1)}{\Fcal}$ is
  maximized when $\text{pr}\condparen{Y_i > 0}{\Fcal} = \Gamma / (1 + \Gamma)$ for all
  $i \in [n]$.
\end{proposition}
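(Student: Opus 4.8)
The plan is to condition on $\Fcal$ and thereby reduce the claim to a statement about a monotone event on the sign vector, after which a standard stochastic-dominance argument finishes the job. First I would note that, under $H_0(\Gamma)$, Fisher's sharp null gives $Y_i = \pm\abs{R_{Ci1} - R_{Ci2}}$, so the absolute differences $\abs{Y_i}$ are $\Fcal$-measurable. Consequently the ranking by absolute value is fixed given $\Fcal$ (ties occurring with probability zero by assumption), and the sole remaining randomness lies in the $n$ signs $S_i \defineas \indicator{Y_{(i)} > 0}$. By the independence-between-pairs clause of $H_0(\Gamma)$, these signs are mutually independent conditional on $\Fcal$, each Bernoulli with parameter $p_i \defineas \P\condparen{Y_{(i)} > 0}{\Fcal} \in [1/(1+\Gamma),\, \Gamma/(1+\Gamma)]$. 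Hence the conditional rejection probability is a function of $(p_1,\dots,p_n)$ ranging over this box, and the goal becomes showing that it is maximized at the upper corner $p_i \equiv \Gamma/(1+\Gamma)$.

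The key step is to observe that the rejection region, viewed as a subset $E \subseteq \brace{0,1}^n$ of sign vectors, is coordinatewise nondecreasing. Since $c_i = \varphi(i/(n+1)) \geq 0$ for every $i$, each statistic $T_n(x) = \sum_{i \geq \ceil{(1-x)(n+1)}} c_i S_i$ is a nonnegatively weighted sum of the signs and so is nondecreasing in each $S_i$; therefore each event $\brace{T_n(x) \geq f_{\alpha,n}(x)}$ is increasing. Because the family takes only the $n$ distinct values $T_n(k/(n+1))$, the rejection event is a union of finitely many increasing events, and such a union is again increasing (its indicator is the maximum of nondecreasing indicators).

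It then remains to apply the standard fact that, under a product measure on $\brace{0,1}^n$, the probability of any increasing event is nondecreasing in each marginal success probability. I would prove this by the usual monotone coupling: whenever $p_i \leq p_i'$ one can couple independent Bernoulli vectors so that $S_i \leq S_i'$ almost surely, whence $\indicator{S \in E} \leq \indicator{S' \in E}$ and the probabilities are ordered. Pushing each $p_i$ up to its maximum $\Gamma/(1+\Gamma)$ therefore maximizes the conditional rejection probability, and this choice, with independent signs, is itself a distribution in $H_0(\Gamma)$, so the maximum is attained within the null. Since the argument holds for every realization of $\Fcal$, the proposition follows. The only real obstacle is the opening reduction—confirming that conditioning on $\Fcal$ genuinely fixes the ranks and leaves an independent product over the signs; once that is secured, the monotonicity and coupling steps are routine and parallel Rosenbaum's fixed-sample worst-case argument in \citet[\S 4.3]{rosenbaum_observational_2002}, the only new ingredient being that monotonicity is preserved under the union over $x$.
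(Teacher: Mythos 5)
Your proposal is correct, and its skeleton matches the paper's own proof: condition on $\Fcal$ so that the absolute ranks are fixed and only the independent signs $S_i = \indicator{Y_{(i)} > 0}$ remain random, observe that the rejection region $\Rcal \subseteq \brace{0,1}^n$ is coordinatewise increasing because every $c_i \geq 0$ (so that each event $\{T_n(x) \geq f_{\alpha,n}(x)\}$ is increasing, and a finite union of increasing events is increasing), and then invoke monotonicity of the probability of an increasing event in the marginal success probabilities, pushing each $p_i$ to $\Gamma/(1+\Gamma)$, which is attainable within $H_0(\Gamma)$. The single point of divergence is how that last monotonicity fact is established: you use the standard monotone coupling (shared uniforms, $S_i = \indicator{U_i \leq p_i}$ versus $S_i' = \indicator{U_i \leq p_i'}$), whereas the paper writes $\P(S \in \Rcal) = \sum_{s \in \Rcal} \prod_i p_i^{s_i}(1-p_i)^{1-s_i}$, differentiates in each $p_k$, and shows the derivative is nonnegative by pairing every term with $s_k = 0$ against its flipped counterpart, which lies in $\Rcal$ by the same monotonicity of $\Rcal$; the paper also remarks that Holley's inequality would serve. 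The two devices are interchangeable here: your coupling is more conceptual and avoids any computation, while the paper's term-by-term cancellation is elementary and self-contained on the discrete cube. The one point worth stating explicitly in your write-up is that the coupling is constructed conditionally on $\Fcal$, with the coupled vector still a product measure, so the comparison holds realization-by-realization of $\Fcal$; you note this at the end, and it is all that is needed.
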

Under this worst-case distribution in $H_0(\Gamma)$, each step of the random
walk equals $c_i$ with probability $\rho_\Gamma = \Gamma / (1 + \Gamma)$
and zero otherwise; these steps are independent. The resulting mean and variance
of $T_n(x)$ are
\begin{align}
\mu_n(x) = E \ebrace{T_n(x)}
  &= \rho_\Gamma \sum_{i = \ceil{(1-x)(n+1)}}^n c_i  \nonumber\\
\sigma_n^2(x) = \text{var}\ebrace{T_n(x)}
  &= \rho_\Gamma (1 - \rho_\Gamma) \sum_{i = \ceil{(1-x)(n+1)}}^n c_i^2.
    \label{eq:sigmasq_defn}
\end{align}
Our threshold function requires a tuning parameter $x_0 > 0$ to be fixed in
advance, such that $\sigma_n^2(x_0) > 0$. If  $\sigma_n^2(x) = 0$ for all $x$,
then we cannot choose a valid $x_0$, but in this case, $T_n(x) = 0$
with probability one for all $x$ and the test statistic will be invariant to the data. We then
construct the following high-probability uniform upper boundary on the random
walk $T_n(x)$:
\begin{align}
f_{\alpha,n}(x) =
  \frac{1}{\lambda_n} \ebracket{
    \log\pfrac{1}{\alpha} +
    \sum_{i=\ceil{(1-x)(n+1)}}^n
      \log\ebrace{1 + \rho_\Gamma (e^{c_i \lambda_n} - 1)}
  },
\label{eq:uniform_bound}
\end{align}
where $\lambda_n = \left\{2 \log(\alpha^{-1}) / \sigma_n^2(x_0)\right\}^{1/2}$. For
notational simplicity, we omit the dependence of $f_{\alpha,n}$ on $x_0$.

\begin{theorem}\label{th:uniform_bound}
  Under $H_0(\Gamma)$, for any $x_0 > 0$ such that $\sigma_n^2(x_0) > 0$ and any
  $\alpha \in (0,1)$, 
  \begin{align*}
    \text{pr}\condbrace{T_n(x) \geq f_{\alpha,n}(x) \textrm{ for some } x \in (0,1)}{\Fcal}
    \leq \alpha.
\end{align*}
	\end{theorem}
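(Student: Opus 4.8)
Let me understand what I need to prove. The theorem is a uniform concentration bound for a random walk. Let me understand the setup.The plan is to recognize the claimed threshold function $f_{\alpha,n}$ as a Ville-type boundary arising from a nonnegative supermartingale, and to apply a maximal inequality. Concretely, under the worst-case distribution identified in \cref{th:worst_case}, it suffices to control the rejection probability when $\P\condparen{Y_i > 0}{\Fcal} = \rho_\Gamma$ independently across $i$; by \cref{th:worst_case} this dominates the probability under any distribution in $H_0(\Gamma)$. So I would condition on $\Fcal$ throughout and work with the random walk indexed by the partial sums, reading the process in the natural ``time'' order $i = n, n-1, \dots, 1$ as $x$ increases through the grid $k/(n+1)$.

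The key step is to build the exponential supermartingale. For the fixed multiplier $\lambda_n$, I would define the process
\begin{align}
M_k \defineas \exp\eparen{\lambda_n \sum_{i=k}^n c_i 1_{Y_{(i)} > 0}
  - \sum_{i=k}^n \log\eparen{1 + \rho_\Gamma(e^{c_i \lambda_n} - 1)}}
\end{align}
for $k = n, n-1, \dots, 1$, with $M_{n+1} = 1$. Because the increments $1_{Y_{(i)} > 0}$ are independent Bernoulli$(\rho_\Gamma)$ variables given $\Fcal$, the moment generating function of each step is exactly $\E[e^{\lambda_n c_i 1_{Y_{(i)} > 0}}] = 1 + \rho_\Gamma(e^{c_i \lambda_n} - 1)$, so $M_k$ is a nonnegative martingale (indexed in reverse) with $\E M_1 = 1$. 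I would then observe that the event $\brace{\exists x: T_n(x) \geq f_{\alpha,n}(x)}$ is, after substituting the definition \eqref{eq:uniform_bound}, exactly the event that $\lambda_n T_n(x) - \sum_{i \geq \ceil{(1-x)(n+1)}} \log(1 + \rho_\Gamma(e^{c_i \lambda_n} - 1)) \geq \log(1/\alpha)$ for some $x$, i.e. $\sup_k M_k \geq 1/\alpha$. Ville's maximal inequality for nonnegative supermartingales then gives $\P\condparen{\sup_k M_k \geq 1/\alpha}{\Fcal} \leq \alpha \, \E M_1 = \alpha$, which is the claim.

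The main obstacle, and the step deserving care, is the bookkeeping that translates the crossing event for $f_{\alpha,n}$ into the level-crossing event for $M_k$, together with verifying that $\lambda_n$ is a legitimate \emph{fixed} multiplier rather than one depending on the data. Since $\lambda_n = \sqrt{2\log\alpha^{-1} / \sigma_n^2(x_0)}$ depends only on $x_0$, $\alpha$, $\Gamma$, and the deterministic scores $c_i$ (and on $\Fcal$ through possible ties, but not on the random signs), it is $\Fcal$-measurable and hence constant from the perspective of the conditional law of the signs; this legitimizes treating $M_k$ as a martingale conditional on $\Fcal$. I would also note that because $T_n(x)$ and $f_{\alpha,n}(x)$ are both step functions constant on the intervals between consecutive grid points $k/(n+1)$, the supremum over the continuum $x \in (0,1)$ reduces to the finite maximum over $k$, so no measurability or continuity subtlety arises. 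The nonnegativity and starting value $\E M_1 = 1$ are immediate, so once the event identity is pinned down the inequality follows directly from Ville's inequality, with no optimization over $\lambda_n$ required for validity (the particular choice of $\lambda_n$ matters only later, for design sensitivity).
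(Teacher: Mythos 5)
Your proposal is correct and follows essentially the same route as the paper's proof: reduce to the worst-case i.i.d.\ Bernoulli$(\rho_\Gamma)$ signs via \cref{th:worst_case}, form the exponential (reverse-indexed) nonnegative martingale from the Bernoulli moment-generating functions, apply Ville's maximal inequality, and reduce the supremum over $x \in (0,1)$ to the finite grid of step-function values. The only cosmetic difference is indexing ($M_k$ summing from $i=k$ versus the paper's $L_k$ summing from $i = n+1-k$), and your explicit remarks on the $\Fcal$-measurability of $\lambda_n$ match the paper's implicit treatment.
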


\Cref{th:uniform_bound} justifies rejecting the sensitivity null $H_0(\Gamma)$
whenever $T_n(x) \geq f_{\alpha,n}(x)$ for some $x \in (0,1)$, allowing us to
adaptively choose a value of $x$ after seeing the data, while retaining Type I
error control at level $\alpha$. We call this test a {uniform general
  signed rank test}.

  The idea is illustrated well by considering the sign test score function $\varphi(q) = 1$ for which the resulting
truncated score functions $\varphi_x$ are exactly the score functions used in
Noether's test \citep{noether_simple_1973,rosenbaum_exact_2012}.
For this choice of $\varphi$, each $T_n(x)$ is the count of successes in a series of $\lfloor n(1-x) \rfloor$ Bernoulli trials or coin flips, each corresponding to a particular matched pair.  When $\Gamma >1$, the coins may be unfair; if we consider the most extreme distribution in $H_0(\Gamma)$ in the sense of \Cref{th:worst_case}, each coin has success probability $p_\Gamma$ and $T_n(x)$ is binomial.  Rather than waiting to observe all $n$ coin flips, or some fixed number of coin flips, before testing the fairness of the coins, our uniform test compares the number of successes to the testing threshold $f_{\alpha,n}$ after each successive coin flip and rejects if the threshold is ever crossed.  Because
the probability bound in \cref{th:uniform_bound} holds uniformly over all $x$,  the test controls Type I error properly. 
  More generally, we can think of the uniform rank test as simultaneously conducting
general signed rank tests with truncated score functions
$\varphi_x(q) = \varphi(q) \indicator{q \geq 1 - x}$ for all values
$x = 1/(n+1), \cdots, n/(n+1)$, but with modified critical values
$f_{\alpha,n}(x)$, and choosing the value of $x$ that yields the strongest inference.  The critical value $f_{\alpha,n}(x)$ is larger than the
fixed-sample exact critical value $c_{\alpha,n}(\Gamma)$ from
\cref{sec:general_sign_rank}, accounting for the uniformity of our test.

To give some intuition for the bound $f_{\alpha,n}$, we show that the following
function yields a good asymptotic approximation to $f_{\alpha,n}$ for large $n$:
\begin{align}
  g_{\alpha,n}(x) =
    \mu_n(x)
    + \ebrace{1 + \frac{\sigma_n^2(x)}{\sigma_n^2(x_0)}}
    \ebrace{\frac{\sigma_n^2(x_0) \log \alpha^{-1}}{2}}^{1/2}.  
\label{eq:g_fn}
\end{align}
Indeed, \cref{th:g_asymp} in \cref{sec:f_asymp_proof} shows that
$f_{\alpha,n}(x) = g_{\alpha,n}(x) + \Ocal(1)$ as $n \to \infty$. The leading
term in \eqref{eq:g_fn}, $\mu_n(x)$, is $\Ocal(n)$ and accounts for the drift of
the random walk. The next term is $\Ocal(\sqrt{n})$ and accounts for the
deviations of the random walk about its mean. As discussed in
\cref{sec:f_asymp_proof}, the parameter $x_0$ determines the value of $x$ for
which the boundary $g_{\alpha,n}(x)$ is optimized, and this motivates the choice
of $\lambda_n$ in the definition of $f_{\alpha,n}$. \Cref{th:uniform_bound}
would continue to hold with any choice $\lambda_n > 0$, but our choice enhances
interpretation by linking $\lambda_n$ to a point $x_0 \in (0,1)$ at which we
desire the bound to be especially small. \Cref{th:g_asymp} also shows that
$f_{\alpha,n}(x) \leq g_{\alpha,n}(x)$ for all $n$ and $x$, so that
$g_{\alpha,n}(x)$ yields a conservative threshold function with a simpler
analytical form, but the resulting test has slightly less power. We prove
\cref{th:uniform_bound} in \cref{sec:proofthm1} using a technique closely related to the
classical Cram\'er-Chernoff method (\citealp{cramer_sur_1938};
\citealp{chernoff_measure_1952}; \citealp{boucheron_concentration_2013}, \S 2.2;
\citealp{howard_exponential_2018}).

\section{Design sensitivity of the uniform test}
\label{sec:uniform_design_sensitivity}

We have shown that the uniform test may be thought of as simultaneously
conducting general signed rank tests at all values of $x$ with modified critical
values $f_{\alpha,n}(x)$. We might equivalently think of this as adjusting the
significance level $\alpha$ downwards, and to different values for different
$x$, in computing critical values for a sequence of general signed rank
tests in such a way that the familywise error rate is controlled over all 
the individual tests.  Compared to more general methods for adjusting 
multiple tests to control familywise error rates, such as the 
Bonferroni and Holm procedures, our test is expected to achieve 
tighter bounds because it explicitly accounts for the strong dependence among the test statistics.

Recalling that the design sensitivity of a general signed rank test
\eqref{eq:fixed_design_sensitivity} does not depend on $\alpha$, we may wonder
if the uniform test has design sensitivity equal to the supremum of the design
sensitivities of the test statistics $T_n(x)$, $x \in (0,1)$. This conclusion is
not quite trivial, since the adjusted significance levels in the uniform test
vary as $n$ grows. Nonetheless, it turns out to be true. We prove this for score
functions $\varphi: (0,1) \to [0,\infty)$ satisfying the following properties:

\begin{assumption} $\int_0^1 \varphi^2(x) \d x < \infty$.
\label{assmp:integrable}
\end{assumption}
\begin{assumption}
 $\varphi$ is discontinuous on a set of Lebesgue measure zero.
 \label{assmp:as_continuous}
\end{assumption}
\begin{assumption}
There exists a constant $a \in [0, 1 / 2)$ such that $\varphi$ is
  nonincreasing on $(0, a)$, nondecreasing on $(1 - a, 1)$, and bounded on
  $(a, 1 - a)$.
  \label{assmp:monotonish}
  \end{assumption}
\begin{assumption} $\int _{1-x}^1 \varphi(y) \d y > 0$ for all $x > 0$.
\label{assmp:nonzero}
\end{assumption}

\begin{theorem}\label{th:unif_ds}
  Suppose $\varphi$ satisfies Assumptions \ref{assmp:integrable}--\ref{assmp:nonzero} above, and $G$ is
  continuous. Then the design sensitivity of the corresponding uniform general
  signed rank test under $H_1(G)$ is
  \begin{multline*}
    \Gammatw_{\varphi,\unif} = \sup_{x \in (0,1)} \Gammatw(x)
      = \sup_{x \in (0,1)} \frac{\pi(x)}{1 - \pi(x)},
 \quad
    \pi(x) =
    \frac{
      \int_0^\infty \varphi\brace{G(y) - G(-y)} \indicator{G(y) - G(-y) \geq 1 - x}
      \d G(y)
    }{\int_{1-x}^1 \varphi(y) \d y}.
  \end{multline*}
\end{theorem}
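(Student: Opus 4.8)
The plan is to verify the two one-sided limits that define design sensitivity: that the rejection probability under $H_1(G)$ tends to one when $\Gamma < \Gammatw_{\varphi,\unif}$ and to zero when $\Gamma > \Gammatw_{\varphi,\unif}$; together these establish that the design sensitivity exists and equals $\sup_x \Gammatw(x)$. Throughout I would condition on the absolute values $\abs{Y_{(1)}} \le \cdots \le \abs{Y_{(n)}}$, under which the signs $\indicator{Y_{(i)} > 0}$ are independent Bernoulli variables with rank-dependent success probabilities $p_{(i)} \approx \bar p(i/(n+1))$, where $\bar p(u)$ is the limiting probability that a pair at absolute-value quantile $u$ has a positive difference. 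A Riemann-sum argument, justified by (P1)--(P3), gives the uniform limits $\mu_n(x)/n \to A_0(x) \defineas \rho_\Gamma D(x)$, $\E_1 T_n(x)/n \to A_1(x) \defineas \int_{1-x}^1 \varphi(u)\bar p(u)\,du$, and $\sigma_n^2(x)/n \to V(x)$, where $D(x) \defineas \int_{1-x}^1 \varphi$. The change of variables $u = G(y) - G(-y)$ identifies $A_1(x)/D(x)$ with $\pi(x)$, so Rosenbaum's formula \eqref{eq:fixed_design_sensitivity} applied to the truncated score $\varphi_x$ gives $\Gammatw(x) = \pi(x)/(1-\pi(x))$; consequently the condition $\Gamma \lessgtr \Gammatw(x)$ is equivalent to $A_1(x) \gtrless A_0(x)$.

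For the lower bound, suppose $\Gamma < \Gammatw_{\varphi,\unif}$. Then some fixed $x^\ast$ has $\pi(x^\ast) > \rho_\Gamma$, i.e. $A_1(x^\ast) > A_0(x^\ast)$. Since the buffer $f_{\alpha,n}(x^\ast) - \mu_n(x^\ast)$ is of order $\sqrt n = o(n)$, we have $f_{\alpha,n}(x^\ast)/n \to A_0(x^\ast)$, while the law of large numbers gives $T_n(x^\ast)/n \to A_1(x^\ast) > A_0(x^\ast)$. Hence $T_n(x^\ast) - f_{\alpha,n}(x^\ast) \to +\infty$ in probability, and the uniform test, which rejects as soon as any $T_n(x)$ exceeds $f_{\alpha,n}(x)$, rejects with probability tending to one. (When $\Gammatw_{\varphi,\unif} = \infty$, such an $x^\ast$ exists for every $\Gamma$, so the upper bound is vacuous.)

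For the upper bound, suppose $\Gamma > \Gammatw_{\varphi,\unif}$, so that $\pi^\ast \defineas \sup_x \pi(x) < \rho_\Gamma$; writing $\kappa \defineas \rho_\Gamma - \pi^\ast > 0$ we obtain $A_0(x) - A_1(x) \ge \kappa D(x) > 0$ for every $x$. I would first replace $f_{\alpha,n}$ by the smaller boundary $g_{\alpha,n}(x) = \mu_n(x) + (1 + \sigma_n^2(x)/\sigma_n^2(x_0)) B_0$ with $B_0 \defineas \sqrt{\sigma_n^2(x_0)\log\alpha^{-1}/2}$, so that it suffices to bound the probability that $T_n(x) \ge g_{\alpha,n}(x)$ at some lattice point $x = k/(n+1)$. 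The gap between $g_{\alpha,n}(x)$ and the $H_1$-mean decomposes as a drift-gap, at least $n\kappa D(x)(1-o(1))$ uniformly, plus a buffer of at least $(1 + \sigma_n^2(x)/\sigma_n^2(x_0)) B_0$. A pointwise Bennett bound then controls each $\P_1(T_n(x) \ge g_{\alpha,n}(x))$: using $\Var_1 T_n(x) \le \tfrac14 \sigma_n^2(x)/(\rho_\Gamma(1-\rho_\Gamma))$, the buffer alone contributes an exponent proportional to $(1+s)^2/s$ with $s = \sigma_n^2(x)/\sigma_n^2(x_0)$, which is bounded below by $4$ and blows up as $x \to 0$, while near $x = x_0$, where this is merely bounded, the drift-gap of order $n$ takes over. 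Combining the two contributions shows the Bennett exponent is $\Omega(\sqrt n)$ uniformly in $x$, so each pointwise probability is at most $e^{-c\sqrt n}$, and a union bound over the $n$ lattice points gives total rejection probability at most $n e^{-c\sqrt n} \to 0$.

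The hard part will be the uniform control in the upper bound near the two ``dangerous'' values of $x$. Near $x = x_0$ the boundary sits only a bounded number of standard deviations above the worst-case null mean---exactly the calibration that makes the null crossing probability $\alpha$---so a single Chernoff or Ville argument with a fixed tilt $\lambda_n$ saturates at level $\alpha$ and yields no decay; the decay there must come from the $\Omega(n)$ drift-gap supplied by $\pi^\ast < \rho_\Gamma$. Near $x \to 0$ the drift-gap vanishes like $D(x)$, and one must instead exploit that the variance-adaptive buffer keeps the standardized margin growing even as the walk's own standard deviation collapses. Reconciling these two regimes uniformly---together with the endpoint behavior of $\varphi$, where the maximal score $\varphi(n/(n+1))$ may grow, as for normal scores, and must be shown via (P1) and (P3) to grow slowly enough not to degrade the exponent below order $\sqrt n$---is the crux of the argument. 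Continuity of $G$ and (P4) enter to guarantee that $\bar p(u)$ and $\pi(x)$ are well defined and that $D(x) > 0$, so that $\kappa D(x)$ is a genuinely positive uniform drift-gap.
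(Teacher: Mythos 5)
Your proposal is correct in outline, and for the direction ``power $\to 1$ when $\Gamma < \Gammatw_{\varphi,\unif}$'' it is essentially the paper's own argument: your Riemann-sum/LLN limits are precisely the paper's \cref{th:f_approximation,th:T_convergence} (the latter a generalization of a theorem of Sen), and fixing a single $x^\ast$ with $\pi(x^\ast) > \rho_\Gamma$ and comparing drifts is exactly what the paper does. Where you genuinely depart from the paper is the direction ``power $\to 0$ when $\Gamma > \Gammatw_{\varphi,\unif}$.'' The paper disposes of this in one sentence: for each fixed $x$, $\P_1(T_n(x) \geq f_{\alpha,n}(x)) \to 0$, ``so'' the rejection probability tends to zero; uniformity over the $n$ lattice points $k/(n+1)$, whose number grows with $n$, is left implicit. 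You correctly identify that pointwise convergence alone does not control the union, and that near $x = 0$ the drift gap $\kappa D(x)$ vanishes so that no argument normalized by $n$ can work there. Your remedy---pointwise Bennett/Bernstein bounds against the smaller boundary $g_{\alpha,n}$, with the variance-adaptive buffer $(1+\sigma_n^2(x)/\sigma_n^2(x_0))B_0$ carrying the small-$x$ regime (for the smallest lattice points crossing is even deterministically impossible, since $T_n(x)$ is a sum of few terms each of size $o(\sqrt n)$ while $f_{\alpha,n}(x) \geq B_0 \asymp \sqrt{n}$) and the $\Omega(n)$ drift gap carrying moderate $x$, followed by a union bound---is structurally sound, and the exponents do work out for bounded scores, for the redescending score (where $D(x) \asymp x^2$), and for normal scores (where the maximal step grows like $\sqrt{\log n}$). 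So your route buys a complete proof of a step the paper only gestures at, at the cost of the two-regime reconciliation you yourself flag as the unexecuted crux; note that you do not actually need the uniform $\Omega(\sqrt n)$ rate---any exponent exceeding $(1+\epsilon)\log n$ uniformly over the lattice suffices for the union bound. One small caveat: your $\bar p(u)$, rank-dependent-Bernoulli framing tacitly assumes $G$ has a density, whereas the theorem assumes only that $G$ is continuous; either define $\bar p$ as a conditional probability given $\abs{Y}$, or work, as the paper's \cref{th:T_convergence} does, directly with the limit $\int_0^\infty \varphi(G(y)-G(-y))\indicator{G(y)-G(-y) \geq 1-x} \d G(y)$.
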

The quantity $\pi(x)$ in \cref{th:unif_ds} is equivalent to the quantity $\pi$
in \eqref{eq:fixed_design_sensitivity} when the score function $\varphi$ is
replaced by the truncated score function
$\varphi_x(q) = \varphi(q) \indicator{q \geq 1 - x}$. Recall from
\eqref{eq:fixed_design_sensitivity} that $\pi$ is monotonically related to the
design sensitivity of the fixed-sample test by the relationship
$\Gamma = \pi / (1 - \pi)$ \citep{rosenbaum_design_2010-1}. Hence
$\Gammatw(x) = \pi(x) / \brace{1 - \pi(x)}$ gives the design sensitivity of a
fixed-sample test with a truncated score function $\varphi_x$, justifying the
interpretation of \cref{th:unif_ds} as showing that the design sensitivity of
the uniform test is equal to the supremum of the design sensitivities of the
family of fixed-sample tests with truncated score functions.

Most of the work in the proof of \cref{th:unif_ds} is captured by the following
pair of lemmas, both proved in \cref{sec:prove_lemmas}. The first characterizes the asymptotic behavior of the boundary
$f_{\alpha,n}(x)$ as $n \to \infty$, and the second generalizes a result of
\citet{sen_convergence_1970}.

\begin{lemma}\label{th:f_approximation}
  If $\varphi$ satisfies Assumptions \ref{assmp:integrable}--\ref{assmp:monotonish} above, then for any $x_0 > 0$ such
  that $\sigma_n^2(x_0) > 0$, any $\alpha \in (0,1)$, and any $x \in (0,1)$, we
  have $n^{-1} \mu_n(x) \to \rho_\Gamma \int_{1-x}^1 \varphi(y) \d y$ and
  $f_{\alpha,n}(x) = \mu_n(x) + \Ocal(\sqrt{n})$ as $n \to \infty$.
\end{lemma}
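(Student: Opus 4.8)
The plan is to reduce both conclusions to a single Riemann-sum fact and then dispatch the second using convexity of the Bernoulli cumulant generating function. Throughout write $m = n+1$ and $i_0 = \lceil (1-x)(n+1) \rceil$, so that $c_i = \varphi(i/m)$ and the sums defining $\mu_n(x)$ and $\sigma_n^2(x)$ run over $i \in \{i_0,\dots,n\}$ along the grid $\{i/m\}$ of spacing $1/m$ filling out $(1-x,1)$. The technical core I would isolate is the following: if $\psi:(0,1)\to[0,\infty)$ is nonincreasing on $(0,a)$, nondecreasing on $(1-a,1)$, bounded with discontinuities on a null set on $(a,1-a)$, and $\psi \in L^1(0,1)$, then for each fixed $x \in (0,1)$,
\[
  \frac{1}{m}\sum_{i=i_0}^{n}\psi\!\left(\frac{i}{m}\right)\;\longrightarrow\;\int_{1-x}^{1}\psi(y)\,\d y.
\]
To prove this I would split the grid at $1-a$. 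On $(1-x,1-a)$ (empty when $x \le a$) the function is bounded and continuous off a null set, hence Riemann integrable, so that partial sum converges to $\int_{1-x}^{1-a}\psi$. On the monotone piece, with $k = \lceil(1-a)m\rceil$ (or $k=i_0$ when $x \le a$), nondecreasingness gives the telescoping sandwich
\[
  \int_{(k-1)/m}^{(m-1)/m}\psi(y)\,\d y\;\le\;\frac{1}{m}\sum_{i=k}^{m-1}\psi\!\left(\frac{i}{m}\right)\;\le\;\int_{k/m}^{1}\psi(y)\,\d y,
\]
whose two sides both converge to $\int_{1-a}^1\psi$ because $\psi \in L^1$ forces $\int_{(m-1)/m}^1\psi \to 0$. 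Multiplying by $m/n \to 1$ converts this to the $n^{-1}$ normalization. (For $x > 1-a$ a symmetric bounded piece appears at the left, harmless since $1-x>0$ is fixed.)

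Applying the claim with $\psi = \varphi$ (valid by (P1)--(P3), noting $\varphi \in L^2(0,1) \subset L^1(0,1)$ on the finite-measure domain) yields the first assertion, $n^{-1}\mu_n(x) = \rho_\Gamma\, n^{-1}\sum_{i=i_0}^n c_i \to \rho_\Gamma\int_{1-x}^1\varphi$. Applying it with $\psi = \varphi^2$ (in $L^1$ by (P1), with the same monotonicity regions) gives $n^{-1}\sigma_n^2(x) \to \rho_\Gamma(1-\rho_\Gamma)\int_{1-x}^1\varphi^2$; in particular $\sigma_n^2(x) = \Ocal(n)$ for every $x$, and for $x=x_0$ the limit is strictly positive because (P4) gives $\int_{1-x_0}^1\varphi > 0$, so $\varphi \not\equiv 0$ and hence $\int_{1-x_0}^1\varphi^2 > 0$ there. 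Therefore $\sigma_n^2(x_0) = \Theta(n)$ and $\lambda_n = \sqrt{2\log\alpha^{-1}/\sigma_n^2(x_0)} = \Theta(n^{-1/2})$.

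For the second assertion I would avoid any delicate remainder estimate and instead bound the cumulant sum in $f_{\alpha,n}$ on both sides by convexity. Writing $\psi_0(u) = \log(1 + \rho_\Gamma(e^u - 1))$ for the log-MGF of a $\Bernoulli(\rho_\Gamma)$ variable, convexity with $\psi_0(0)=0$ and $\psi_0'(0)=\rho_\Gamma$ gives the tangent lower bound $\psi_0(u) \ge \rho_\Gamma u$, while Hoeffding's lemma applied to $S_i \in [0,1]$ gives $\psi_0(u) \le \rho_\Gamma u + u^2/8$. Substituting $u = c_i\lambda_n$ and summing, and using $\rho_\Gamma\lambda_n\sum c_i = \lambda_n\mu_n(x)$ and $\lambda_n^2\sum c_i^2 = \lambda_n^2\sigma_n^2(x)/[\rho_\Gamma(1-\rho_\Gamma)]$, I obtain
\[
  \lambda_n\mu_n(x)\;\le\;\sum_{i=i_0}^n\psi_0(c_i\lambda_n)\;\le\;\lambda_n\mu_n(x) + \frac{\lambda_n^2\,\sigma_n^2(x)}{8\rho_\Gamma(1-\rho_\Gamma)}.
\]
Dividing by $\lambda_n$ and adding $\log\alpha^{-1}/\lambda_n$ gives
\[
  \mu_n(x) + \frac{\log\alpha^{-1}}{\lambda_n}\;\le\;f_{\alpha,n}(x)\;\le\;\mu_n(x) + \frac{\log\alpha^{-1}}{\lambda_n} + \frac{\lambda_n\,\sigma_n^2(x)}{8\rho_\Gamma(1-\rho_\Gamma)}.
\]
Both $\log\alpha^{-1}/\lambda_n = \Theta(\sqrt{n})$ and $\lambda_n\sigma_n^2(x) = \Theta(n^{-1/2})\cdot\Ocal(n) = \Ocal(\sqrt{n})$ by the previous paragraph, so $f_{\alpha,n}(x) = \mu_n(x) + \Ocal(\sqrt{n})$.

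The one genuinely delicate point is the Riemann-sum claim near $y=1$: under (P1)--(P3) the score $\varphi$ (and $\varphi^2$) may be unbounded as $y \to 1$---as for the normal-scores function---so the largest summand $c_n = \varphi(n/(n+1))$ need not be $\Ocal(1)$ and a naive uniform-continuity argument fails. This is exactly where the endpoint monotonicity in (P3) earns its keep: the telescoping sandwich controls the sum by integrals that are finite via the $L^1$ bound from (P1), so convergence holds despite the blow-up. Routing the second assertion through the convexity bounds then lets me sidestep any third-moment control of $\varphi$, which is why only $\sigma_n^2(x) = \Ocal(n)$---rather than a bound on $\sum c_i^3$---is required.
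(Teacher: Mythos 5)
Your proof is correct, and it splits from the paper's argument in an interesting way. For the first claim you and the paper do essentially the same thing: the paper's \cref{th:riemann,th:riemann_monotone} split the Riemann sum into a bounded, a.e.-continuous middle piece (Riemann integrable by the Lebesgue criterion) and monotone end pieces, handling the latter by dominated convergence where you use a telescoping integral sandwich---an immaterial difference. For the second claim your route is genuinely different. The paper writes $\log\bigl(1+\rho_\Gamma(e^{c_i\lambda_n}-1)\bigr) = \rho_\Gamma c_i\lambda_n + \tfrac{1}{2}\rho_\Gamma(1-\rho_\Gamma)h(\xi_i)c_i^2\lambda_n^2$ by Taylor's theorem with Lagrange remainder, and then proves a separate technical lemma (\cref{th:h_technical}) that $0 \le h \le 1$, a monotonicity computation that is valid only because $\Gamma \ge 1$ forces $\rho_\Gamma \ge 1/2$. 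You instead sandwich the Bernoulli cumulant generating function globally---tangent line at zero from below (convexity), Hoeffding's lemma from above---which needs no constraint on $\rho_\Gamma$, no auxiliary lemma, and no case analysis; the price is the constant $1/8$ in place of $\rho_\Gamma(1-\rho_\Gamma)/2$, which is irrelevant for an $\Ocal(\sqrt{n})$ statement. After the sandwich, both proofs finish identically, reducing everything to $\sigma_n^2(x) = \Ocal(n)$ and $\lambda_n = \Theta(n^{-1/2})$.

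One substantive remark on hypotheses. When $x > x_0$, making $\lambda_n\sigma_n^2(x) = \Ocal(\sqrt{n})$ requires the lower bound $\sigma_n^2(x_0) = \Omega(n)$, not merely upper bounds; the paper's final display uses this implicitly (its middle term $\sqrt{2n\log\alpha^{-1}/\sigma_n^2(x_0)}\cdot\sigma_n^2(x)/n$ is $\Ocal(1)$ only if $\sigma_n^2(x_0) \gtrsim n$ whenever $\sigma_n^2(x) = \Theta(n)$) but never establishes it. You confront this head-on, deducing $n^{-1}\sigma_n^2(x_0) \to \rho_\Gamma(1-\rho_\Gamma)\int_{1-x_0}^1 \varphi^2(y)\,\d y > 0$. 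However, your positivity argument leans on (P4), which is not among the lemma's stated hypotheses---the lemma assumes only (P1)--(P3) and $\sigma_n^2(x_0) > 0$. The appeal is harmless where the lemma is actually used, since \cref{th:unif_ds,th:limiting_ds} assume (P4), but you should flag that you are importing it; alternatively, note that for $x \le x_0$ the monotonicity $\sigma_n^2(x) \le \sigma_n^2(x_0)$ makes any lower bound unnecessary, so only the case $x > x_0$ requires this extra input.
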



\begin{lemma}\label{th:T_convergence}
  If $\varphi$ satisfies Assumptions \ref{assmp:integrable}--\ref{assmp:nonzero}, and $Y_1,Y_2,\dots$ are drawn
independently \ from the same continuous distribution $G$, then with probability one,
  \begin{align*}
    \lim_{n \to \infty} \frac{1}{n}
      \sum_{i=1}^n \varphi\pfrac{i}{n+1} \indicator{Y_{(i)} > 0}
    = \int_0^\infty \varphi\ebrace{G(y) - G(-y)} \d G(y) 
  \end{align*}
\end{lemma}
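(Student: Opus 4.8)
The plan is to establish the almost-sure convergence by rewriting the empirical sum as an integral against the empirical distribution of the data, and then controlling the difference between this empirical object and its population counterpart $\int_0^\infty \varphi(G(y) - G(-y)) \, \d G(y)$. First I would introduce the empirical CDF $G_n$ of the $Y_i$ and the empirical CDF $H_n$ of the absolute values $\abs{Y_i}$, whose population analogue is $H(y) \defineas G(y) - G(-y)$. The key observation is that $\indicator{Y_{(i)} > 0}$ tracks which of the order-statistics-by-absolute-value are positive, so the sum $n^{-1} \sum_i \varphi(i/(n+1)) \indicator{Y_{(i)} > 0}$ can be expressed as an integral of $\varphi(H_n(\cdot))$ against the subprobability measure that places mass $1/n$ at each positive $Y_i$. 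By the Glivenko--Cantelli theorem, $G_n \to G$ and $H_n \to H$ uniformly almost surely, so the plan is to pass to the limit inside this integral representation.

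The main technical work, and the reason this generalizes \citet{sen_convergence_1970}, is handling the unboundedness of $\varphi$ near the endpoints $0$ and $1$ permitted by condition (P3). On the compact middle region where $\varphi$ is bounded, uniform convergence of the empirical quantile/CDF together with the continuity of $\varphi$ almost everywhere (P2) lets me apply a dominated-convergence or Helly-type argument directly. Near the endpoints I would instead use condition (P1), the square-integrability $\int_0^1 \varphi^2 < \infty$ (hence integrability), to show that the tail contributions are uniformly small: I would split the sum at thresholds $i \leq \epsilon n$ and $i \geq (1-\epsilon)n$, bound the endpoint pieces using the monotonicity from (P3) and a Cauchy--Schwarz or direct $L^1$ estimate on $\varphi$, and argue that these contributions vanish as $\epsilon \to 0$ uniformly in $n$ for $n$ large, almost surely. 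The continuity of $G$ is what guarantees no ties and that $H_n$ inverts cleanly, so that $i/(n+1) \approx H_n(\abs{Y_{(i)}})$ converges to $H(\abs{Y_{(i)}})$ at the relevant points.

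I expect the chief obstacle to be the endpoint control near $q \to 1$, where $\varphi$ may blow up (as in the normal scores function). The difficulty is that the empirical ranks $i/(n+1)$ for the largest-absolute-value observations fluctuate, and $\varphi$ evaluated there can be large, so a naive bound does not immediately show the tail sum is negligible. The resolution is to exploit (P1): since $\int_{1-\epsilon}^1 \varphi(y) \, \d y \to 0$ as $\epsilon \to 0$, and the empirical average $n^{-1} \sum_{i > (1-\epsilon)n} \varphi(i/(n+1))$ converges to this integral (a Riemann-sum argument valid because $\varphi$ is monotone there), the positive-sign indicators only shrink the sum further, giving a uniform tail bound that can be made arbitrarily small. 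Once the middle region is handled by straightforward a.s.\ convergence and the two tails are shown to be uniformly small, combining the pieces via an $\epsilon$-argument yields the claimed almost-sure limit. Finally I would identify the limit as $\int_0^\infty \varphi(H(y)) \, \d G(y) = \int_0^\infty \varphi(G(y) - G(-y)) \, \d G(y)$ by recognizing the limiting integral representation, completing the proof.
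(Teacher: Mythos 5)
Your proposal is correct in outline but takes a genuinely different route from the paper. The paper never manipulates the empirical CDF directly: it approximates $\varphi$ in $L^1$ by a continuous function $\varphi_c$ vanishing near $0$ and $1$ (density of compactly supported continuous functions in $L^1$), invokes Sen's theorem for bounded continuous score functions (\cref{th:sen}) to get the limit for $\varphi_c$, and then controls the two approximation errors---the error in the random sum via the Riemann-sum lemma (\cref{th:riemann}), and the error in the limiting integral via the domination of $\d G$ by $\d H$ on $(0,\infty)$. You instead split the sum at $\epsilon n$ and $(1-\epsilon)n$ and argue directly: the two tails are controlled by nonnegativity of $\varphi$ and of the indicators plus the monotone Riemann-sum estimate (exactly the content of the paper's \cref{th:riemann_monotone}), and the middle is handled by Glivenko--Cantelli. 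Your tail treatment is sound, and it correctly avoids the error in Sen's original argument documented in \cref{sec:sen_discussion}: you claim Riemann-sum convergence only where $\varphi$ is monotone or bounded, which is precisely the repair the paper makes. What the paper's route buys is that the hard ``middle'' work reduces to a clean modulus-of-continuity argument for a continuous function; what your route buys is a self-contained proof that never cites Sen.

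The one step you cannot wave at is the middle region. ``Uniform convergence of $H_n$ plus a.e.\ continuity of $\varphi$ plus dominated convergence'' does not suffice as stated: the sum is an integral against the empirical measure, which changes with $n$, and both families of evaluation points---the ranks $i/(n+1)$ and the transformed order statistics $U_{(i)} = H(|Y_{(i)}|)$, which are uniform order statistics---move with $n$, so there is no fixed sequence of functions to dominate. Concretely, after the SLLN handles $n^{-1}\sum_i \varphi(U_{(i)})\indicator{Y_{(i)}>0}$ over the middle indices, you still need $n^{-1}\sum_i \eabs{\varphi(i/(n+1)) - \varphi(U_{(i)})} \to 0$ a.s.\ there, and for a merely a.e.-continuous bounded $\varphi$ the uniform-modulus argument that works for continuous functions fails (the discontinuity set can be dense). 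The claim is still true, but it requires a Riemann-integrability-type oscillation argument: since $\varphi$ is bounded and a.e.\ continuous on $[\epsilon,1-\epsilon]$ it is Riemann integrable there, so choose a partition with small total oscillation, note that by Glivenko--Cantelli all but a vanishing fraction of indices have both evaluation points in a common cell, and bound the exceptional indices by the sup of $\varphi$ on the middle interval. Alternatively you can approximate $\varphi$ by a continuous function---but that is exactly the paper's $L^1$-approximation device, at which point the two proofs coincide. Either way, this step is where the content of (P2) actually enters and it must be spelled out.
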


\begin{proof}[of \cref{th:unif_ds}]
  Let $H(x) = G(x) - G(-x)$ denote the distribution of $\abs{Y}$. Fix
  any $x \in (0, 1)$. Applying \cref{th:T_convergence} to the truncated score
  function $\varphi_x(q) = \varphi(q) \indicator{q \geq 1 - x}$ yields
  \begin{align}
    \lim_{n \to \infty} \frac{T_n(x)}{n} =
    \int_{0}^\infty \varphi\ebrace{H(y)} \indicator{H(y) \geq 1 - x} \d G(y)
     \label{eq:T_limit}
  \end{align}
  with probability one.
  Meanwhile, \cref{th:f_approximation} implies that
  \begin{align}
    \lim_{n \to \infty} \frac{f_{\alpha,n}(x)}{n}
      = \rho_\Gamma \int_{1-x}^1 \varphi(y) \d y. \label{eq:f_limit}
  \end{align}
  Combining \eqref{eq:f_limit} with \eqref{eq:T_limit}, we conclude that
  \begin{multline*}
    \text{pr}\ebrace{T_n(x) \geq f_{\alpha,n}(x)}
      = \text{pr}\ebrace{n^{-1} T_n(x) \geq n^{-1} f_{\alpha,n}(x)}
      \to 1 \\
    \text{if }
    \int_{0}^\infty \varphi\ebrace{H(y)} \indicator{H(y) \geq 1 - x} \d G(y)
    > \rho_\Gamma \int_{1-x}^1 \varphi(y) \d y,
  \end{multline*}
  that is, if $\Gamma < \pi(x) / \brace{1 - \pi(x)}$. Since the uniform test
  rejects whenever $T_n(x) \geq f_{\alpha,n}(x)$ for some $x$, it will reject
  with probability approaching one whenever
  $\Gamma < \pi(x) / \brace{1 - \pi(x)}$ for some $x \in (0, 1)$. By a similar
  argument, $\text{pr}\brace{T_n(x) \geq f_{\alpha,n}(x)} \to 0$ if
  $\Gamma > \pi(x) / \brace{1 - \pi(x)}$, so the uniform test will reject with
  probability approaching zero if $\Gamma > \pi(x) / \brace{1 - \pi(x)}$ for all
  $x \in (0,1)$. The conclusion follows.
\end{proof}

Compare \cref{th:unif_ds} to Proposition 1 of
\citet{rosenbaum_exact_2012}. Rosenbaum constructs an adaptive test choosing
between two test statistics and achieving design sensitivity equal to the
maximum of the two component tests. \cref{th:unif_ds} shows that this principle
may be extended to an infinite family of tests, in this case because the family
possesses a dependence structure that allows us to construct an appropriate
uniform bound.

All of the score functions introduced in \cref{sec:general_sign_rank} satisfy
Assumptions \ref{assmp:integrable}--\ref{assmp:nonzero}. Most of these are
obvious; it suffices to show that the normal score function satisfies Assumption
\ref{assmp:integrable}.
\begin{proposition}\label{th:normal_scores_lp}
  For the normal score function, $\varphi(q) = \Phi^{-1}\brace{(1 + q) / 2}$, we
  have $\int_0^1 \varphi^p(x) \d x < \infty$ for all $p \geq 1$.
\end{proposition}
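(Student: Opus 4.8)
The plan is to isolate the only singularity of the integrand and show that it is too mild to threaten integrability against any fixed power. First I would note that $\varphi(q) = \Phi^{-1}((1+q)/2)$ is continuous on $(0,1)$, with $\varphi(q) \to \Phi^{-1}(1/2) = 0$ as $q \to 0^+$; hence $\varphi$ is bounded on every interval $(0, 1-\epsilon]$ and $\int_0^{1-\epsilon} \varphi^p(q)\,\d q < \infty$ for any $\epsilon > 0$. The entire question therefore reduces to the behavior of $\varphi(q)$ as $q \to 1^-$, where $(1+q)/2 \to 1$ and $\varphi(q) \to \infty$.

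The key step is to quantify this blow-up using a Gaussian tail bound. Writing $z = \varphi(q)$, so that $\Phi(z) = (1+q)/2$ and $1 - \Phi(z) = (1-q)/2$, I would invoke the standard sub-Gaussian (Chernoff) tail inequality $1 - \Phi(z) \leq e^{-z^2/2}$, valid for all $z \geq 0$. Since $q \geq 0$ forces $z = \varphi(q) \geq 0$, this gives $(1-q)/2 \leq e^{-\varphi(q)^2/2}$, and taking logarithms and rearranging yields the explicit pointwise bound
\[
  \varphi(q) \leq \sqrt{2 \log\frac{2}{1-q}}
  = \sqrt{2\log 2 + 2 \log\frac{1}{1-q}},
  \qquad q \in [0,1).
\]
This inequality is the crux of the argument: the normal quantile function grows only like the square root of a logarithm near the endpoint.

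It then remains to verify that $(\log(1/(1-q)))^{p/2}$ is integrable on $(0,1)$, which I would do through the substitution $t = 1-q$, reducing the claim to the finiteness of $\int_0^1 (-\log t)^{p/2}\,\d t = \Gamma(p/2 + 1)$, the value of the Gamma function. Combining this with the elementary inequality $(a+b)^{p/2} \leq C_p(a^{p/2} + b^{p/2})$ to absorb the additive constant $2\log 2$, I would raise the displayed bound to the $p$-th power, integrate, and conclude that $\int_0^1 \varphi^p(q)\,\d q < \infty$ for every $p \geq 1$.

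I do not anticipate a genuine obstacle: the whole content is the logarithmic growth rate of $\Phi^{-1}$ at its right endpoint, and once the tail bound is in hand the integrability is a routine Gamma-function computation. The only point requiring minor care is choosing the right inequality---it suffices to use the crude Chernoff bound rather than the sharper Mills-ratio asymptotics, since we need only an upper bound on $\varphi$, not a matching lower bound.
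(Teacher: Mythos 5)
Your proof is correct and follows essentially the same route as the paper: both hinge on the Cram\'er--Chernoff bound $1 - \Phi(z) \leq e^{-z^2/2}$ to get $\varphi(q) \leq \sqrt{2\log(2/(1-q))}$, and both finish by reducing the resulting integral to a Gamma-function value. Your handling of the additive constant $2\log 2$ via $(a+b)^{p/2} \leq C_p(a^{p/2}+b^{p/2})$ is a cosmetic variant of the paper's substitution $y = \log(2/(1-q))$, which keeps the constant inside the integral.
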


\begin{figure}
  \includegraphics{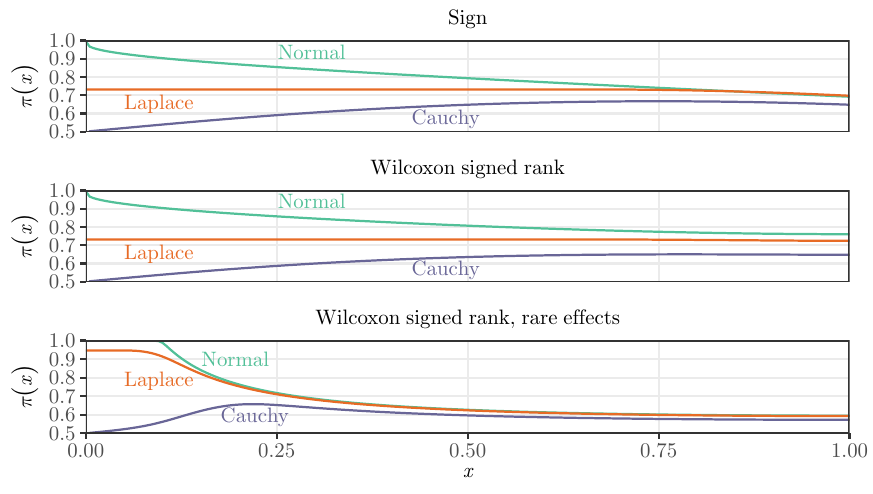}
  \caption{$\pi(x)$ from \cref{th:unif_ds} for sign and Wilcoxon signed rank score functions
    when $G$ is standard normal, Laplace (double exponential) or Cauchy. First
    two panels show alternative with $\tau = 1/2$. Bottom panel shows rare
    effects model: 90\% of pairs have no treatment effect, $\tau = 0$ while 10\%
    of pairs have a large treatment effect, $\tau = 5$. See
    \cref{fig:pi_plot_extra} in \cref{sec:addplots} for the corresponding plot with the normal score function, which has $\pi(x)$ qualitatively similar to
    that for the Wilcoxon signed rank score function. \label{fig:pi_plot}}
\end{figure}

\Cref{fig:pi_plot} plots $\pi(x)$ as defined in \cref{th:unif_ds}, showing how
the design sensitivity of a fixed-sample test varies with the truncation level
$x$. Recall that $\pi(x) = \Gammatw(x) / [1 + \Gammatw(x)]$, so $\pi(x)$ indicates
design sensitivity on a $[0,1]$ scale rather than a $[0,\infty)$ scale. Each
panel includes three alternative distributions $G$: normal with unit variance,
Laplace (double exponential) with unit scale, and Cauchy with unit scale. In the
first two panels, each distribution is centered at $\tau = 1/2$. The bottom
panel shows a rare effects model in which $G$ is a mixture of two of the given
base distributions, one centered at zero receiving 90\% of the total mass, and
the other centered at $\tau = 5$ receiving 10\% of the total mass. This
simulates a situation in which 90\% of pairs have no treatment effect, while the
remaining 10\% of pairs have a large constant treatment effect, so that the
average treatment effect remains equal to $1/2$.

The first two panels of \cref{fig:pi_plot} show $\pi(x)$ for the sign and Wilcoxon signed rank
score functions introduced in \cref{sec:general_sign_rank}; \cref{fig:pi_plot_extra} in \cref{sec:addplots} plots 
$\pi(x)$ for the normal score function, which is
qualitatively similar to $\pi(x)$ for the Wilcoxon signed rank score function. For the sign test, $\pi(x)$ is
maximized at some value $x < 1$ under all distributions, although the increase
is modest for the Laplace and Cauchy alternatives. This illustrates the benefits
of truncation with the sign test. With the Wilcoxon signed rank test, we still see dramatic gains
under a normal alternative, and indeed $\pi(x) \uparrow 1$ as $x \downarrow 0$
for all of our score functions under a normal alternative. This indicates we can
achieve infinite design sensitivity under normal tails, a fact that we prove in
\cref{th:normal_ds}. Under the Laplace or Cauchy alternatives, however, we do
not see substantial gains in $\pi(x)$ as $x$ decreases from one for the Wilcoxon signed rank test;
the same holds true for the normal score function. Under the heavier-tailed Laplace and Cauchy alternatives, it seems,
score functions that place more weight on larger outcomes do not benefit from
narrowing attention to a subset of pairs with the largest absolute
differences. Informally speaking, the higher likelihood of large outliers means
less information is present in the tails.

The $\pi(x)$ functions in the bottom panel, computed under a rare effects model,
tells a different story. Here, a uniform Wilcoxon signed rank test benefits from narrowing attention
to a subset of pairs with large absolute differences regardless of the
alternative distribution, although gains are still more modest for the Cauchy
alternative than for the others. This confirms that, when effects are large and rare, a test which restricts attention accordingly will retain high power under more adversarial conditions and achieve a larger design sensitivity than a less-focused test.

\Cref{fig:pi_plot} makes it clear that the best choice of $x$ depends on the
alternative distribution $G$ and the score function in a complicated manner. The
advantage of our uniform test is that it can adapt to the alternative at hand
without prior knowledge, achieving performance equivalent to the oracle choice
of $x$ in terms of design sensitivity. It it also notable that all four score
functions exhibit identical behavior near $x = 0$. The following result makes
this observation precise whenever $G$ is continuous with infinite support. We
show that the limiting behavior of $\pi(x)$ as $x \downarrow 0$ is often
determined by the tails of $G$ alone, not by the score function $\varphi$, and
this may be used to bound the design sensitivity from below over a broad class of
score functions.

\begin{theorem}\label{th:limiting_ds}
  Suppose $\varphi$ satisfies Assumptions \ref{assmp:integrable}--\ref{assmp:nonzero} above, and suppose $G$ has
  positive density $g(x)$ with respect to Lebesgue measure for all $x \in \mathbb{R}$. Then 

  \begin{align}
    \Gammatw_{\varphi,\unif} \geq
      \liminf_{q \uparrow \infty} \frac{g(q)}{g(-q)}. \label{eq:ds_limit}
  \end{align}
\end{theorem}

Plugging the normal density into \cref{th:limiting_ds} for $g(x)$ confirms the
fact suggested by \cref{fig:pi_plot}:

\begin{corollary}\label{th:normal_ds}
  If $G = \Normal(\tau, \sigma^2)$, then $\Gammatw_{\varphi,\unif} =
  \infty$. That is, no matter what value of $\Gamma$ is used in a sensitivity
  analysis with a uniform general signed rank test, the power under $H_1(G)$
  tends to one as $n \to \infty$.
\end{corollary}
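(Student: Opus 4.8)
The plan is to invoke \cref{th:limiting_ds} directly, since its hypotheses are easily checked for the normal family and its conclusion is a lower bound on $\Gammatw_{\varphi,\unif}$. First I would verify that $G = \Normal(\tau, \sigma^2)$ satisfies the standing assumption of the theorem: its density $g(x) = (2\pi\sigma^2)^{-1/2} \exp\eparen{-(x - \tau)^2 / (2\sigma^2)}$ is strictly positive for every $x \in \R$, so the theorem applies. The conditions (P1)--(P4) on $\varphi$ are assumed throughout and hold for each of the four score functions, as already noted, with \cref{th:normal_scores_lp} handling (P1) for the normal scores. Throughout I take $\tau > 0$, the relevant regime for a favorable alternative with a positive average treatment effect.

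The computational core is evaluating the liminf appearing in \cref{th:limiting_ds}. A direct substitution of the normal density gives
\begin{align}
\frac{g(q)}{g(-q)}
= \exp\eparen{\frac{(q + \tau)^2 - (q - \tau)^2}{2\sigma^2}}
= \exp\eparen{\frac{2 \tau q}{\sigma^2}}.
\end{align}
Because $\tau > 0$, this quantity increases to $+\infty$ as $q \uparrow \infty$, so that $\liminf_{q \uparrow \infty} g(q)/g(-q) = +\infty$.

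Combining these two facts, \cref{th:limiting_ds} yields the lower bound $\Gammatw_{\varphi,\unif} \geq +\infty$, which forces $\Gammatw_{\varphi,\unif} = \infty$. Translating back through the definition of design sensitivity in \cref{sec:power}, this says that the power of the uniform general signed rank test under $H_1(G)$ tends to one as $n \to \infty$ for every fixed $\Gamma \geq 1$, exactly the claimed conclusion.

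I do not anticipate any genuine obstacle: the substantive work resides in \cref{th:limiting_ds}, and the corollary reduces to a one-line likelihood-ratio computation. The only points requiring care are stating explicitly that we work in the favorable regime $\tau > 0$ (when $\tau = 0$ the difference distribution is symmetric and the design sensitivity is trivially $1$), and confirming positivity of the Gaussian density so that the theorem's hypothesis is met. The underlying mechanism is transparent: once shifted by $\tau$, the right tail of $g$ is exponentially heavier than the left tail, so the tail likelihood ratio $g(q)/g(-q)$ diverges, which is precisely the behavior \cref{th:limiting_ds} converts into infinite design sensitivity.
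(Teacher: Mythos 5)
Your proposal is correct and is exactly the paper's argument: the paper proves \cref{th:normal_ds} by plugging the normal density into \cref{th:limiting_ds}, which is precisely your computation $g(q)/g(-q) = \exp(2\tau q/\sigma^2) \to \infty$. Your explicit remark that one must take $\tau > 0$ (the favorable-alternative regime, left implicit in the paper's statement) is a sensible added precision, not a deviation.
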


As also suggested by \cref{fig:pi_plot}, the uniform test does not achieve
infinite design sensitivity for other distributions. The limit
\eqref{eq:ds_limit} is equal to $\exp(2\mu/s)$ for the Laplace distribution with
mean $\mu$ and scale $s$, while it is equal to one for the Cauchy distribution,
hence \cref{th:limiting_ds} offers no meaningful guarantee for the Cauchy
distribution.

\section{Simulations}\label{sec:simulations}

\Cref{fig:fixed_vs_uniform} illustrates
\cref{th:unif_ds} with simulations under standard normal, Laplace and Cauchy
alternatives; in each case $\tau = 1/2$, except for the rare effects panels
in \cref{fig:fixed_vs_uniform}, which uses the rare effects model described in
\cref{sec:uniform_design_sensitivity}. We simulate both fixed-sample
tests and uniform tests based on \cref{th:uniform_bound} with the three score
functions introduced in \cref{sec:general_sign_rank}. All tests are run with
level $\alpha = 0.05$ and plots are based on 10,000 replications.  
\begin{figure}
\includegraphics{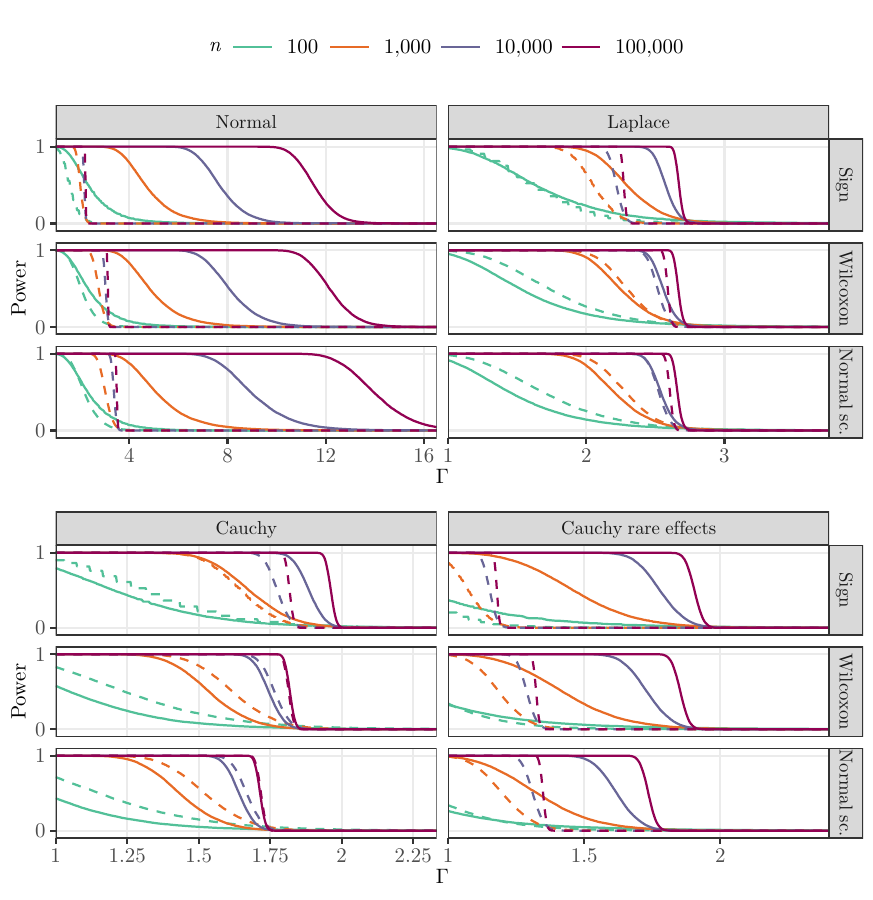}
\caption{Comparison of simulated power for fixed-sample tests (dashed lines) vs.
  uniform tests (solid lines), based on 10,000 replications. Cauchy rare
  effects panels show the rare effects alternative model based on Cauchy
  distribution, as described in \cref{sec:uniform_design_sensitivity}. Other
  panels show alternative model $H_1(G)$ with distribution $G$ as indicated,
  having center $1/2$ and unit scale. All tests use $\alpha =
  0.05$. \label{fig:fixed_vs_uniform}}
\end{figure}

Our results show that the improvements in performance for uniform tests expected from \cref{fig:pi_plot} are in many cases realized in moderately-sized finite samples.
\Cref{fig:fixed_vs_uniform}
compares power for each uniform test to the the corresponding fixed-sample test
based on the same score function, for a variety of different sample sizes. In the normal case, the uniform test does not
indicate finite design sensitivity, as we expect from \cref{th:normal_ds}, and
all uniform tests show substantial gains over their fixed-sample counterparts
for $n \geq$ 1,000. In the Laplace and Cauchy cases, the uniform sign test still
shows gains, but uniform tests based on other score functions often fail to
outperform their fixed-sample counterparts, as we expect from
\cref{fig:pi_plot}; only at the very largest sample sizes do the uniform tests 
remain competitive in nearly all cases. However, in the rare effects case 
each uniform test improves substantially on its fixed-sample counterpart for
$n \geq 1,000$, even with Cauchy noise. Though not shown here, the gains for
normal and Laplace noise under the rare effects model are even more
dramatic. \Cref{fig:compare_scores} in \cref{sec:addplots} gives
additional power comparisons, while \cref{sec:typeI} contains simulation results
under the null, illustrating the true Type I error rates of our uniform test in
some representative settings.

Overall, the uniform sign test shows considerable promise for use
in practice. It is competitive in all cases and is the strongest performer of
the four tests considered here in a number of cases. This is particularly
interesting since the fixed-sample sign test is arguably the least attractive
among the fixed-sample tests we have considered. It seems the landscape of
uniform general signed rank tests is qualitatively different from that of their
fixed-sample counterparts.

\section{Application: impact of fish consumption on mercury concentration}
\label{sec:data}

Mercury can be harmful to human health when concentrated too heavily in the
bloodstream, and evidence suggests that consuming large amounts of fish can lead
to elevated levels of mercury in the blood \citep{mahaffey2004blood}.  To study
the impact of a high-fish diet on mercury concentration we use data from the
National Health and Nutrition Examination Survey \citep{nchs2017nhanes}, which
records information about respondents' diets as well as analysis of blood
samples, including a measure of total mercury concentration.  Each of the 1,672
respondents from 2007 to 2016 who consumed an average of 15 or more servings of
fish monthly was matched to a similar respondent consuming two or fewer servings
per month.  Respondents were matched within the same two-year period, and pairs
were chosen by optimal matching on a robust Mahalanobis distance \citep[\S
8.3]{rosenbaum_design_2010} computed from respondent age, household income,
gender, ethnicity, cigarettes smoked per day, and indicators for high school
graduation, missing high school graduation status, and smoking more than 7
cigarettes per
day.  
A propensity score was fit to these same variables, and matches were required to
obey a propensity score caliper of 0.2 standard deviations
\citep{rosenbaum1985constructing}.  The final matched sample of 1,672 pairs
achieved close balance on covariates; see \cref{tab:balance} in \cref{sec:addplots}.  For matching, we used R packages \texttt{rcbalance} and
\texttt{optmatch} with package \texttt{cobalt} used for balance checking
\citep{pimentel2016large, hansen2006optimal, greifer2018cobalt}.  For more
discussion on the optimal construction of matched samples see
\citet{rosenbaum1989optimal}, \citet{hansen2004full},
\citet{zubizarreta2014matching}, and \citet{pimentel2015large}.

Although the balance on observed variables in Table \cref{tab:balance} is
very close, individuals with high-fish diets may differ from individuals with
low-fish diets on many unobserved attributes correlated with mercury levels.
Accordingly, we are interested not only in whether a test assuming an absence of
unobserved confounders rejects the null hypothesis, but in how sensitive such a
result is to potential bias from unobserved confounders.

In each of the 1,672 pairs formed, we computed the difference in total mercury
concentration, in micrograms per mole, between the respondent with the high-fish
diet and the respondent with the low-fish diet.  The average concentrations for
matched individuals with high-fish diets and low-fish diets were 3.76 and 1.02
respectively, yielding an average pair difference of 2.73 micrograms per
mole. We next tested the sharp null of no effect of treatment in any pair.
Mercury measurements were rounded to two decimal places leading to some ties, so
for each test we used a minor modification of our procedure described in
\cref{sec:ties} to account for ties. We set
$x_0 = 1/3$ for the uniform bound. Setting $x_0$ too large leads to slight
conservatism, as shown in \cref{fig:typeI_x0} in \cref{sec:typeI},
while setting $x_0$ too small optimizes for a test using only a small fraction
of the data. Our choice simply balances between these two extremes. In other
settings, prior information about the distribution of outcomes and treatment
effects may inform the choice of $x_0$ (see, e.g., \cref{fig:pi_plot}). The
methods and data used for this analysis will be made available to readers via
the R package \texttt{uniformrank}.

The first three columns of \Cref{tab:sens_all} show the results of sensitivity
analysis in the matched data for the three general signed rank tests considered
in this paper.  For each of these test statistics, the na\"{i}ve test with
$\Gamma = 1$ produces results highly significant at the 0.05 level.  The
numbers in the table describe the smallest amount of unmeasured bias necessary
to explain the observed effects assuming there is no true effect of
treatment, that is, the minimum value of $\Gamma$ at which we fail to reject
the sensitivity analysis null.  For example, the fixed-sample sign test ceases
to reject the null when we allow for an unobserved confounder that increases
the odds of a high-fish diet by a factor of $\Gamma = 4.82$; in contrast, the
uniform sign test requires an unobserved confounder that increases the odds of
a high fish diet by $\Gamma = 10.51$ before it ceases to reject.

\begin{table}[ht]
\centering
\begin{tabular}{l|rr|rr}
\hline
&\multicolumn{2}{c|}{1,672 pairs} & \multicolumn{2}{c}{190 pairs} \\
Score function & Fixed-sample & Uniform & Fixed-sample & Uniform \\
\hline
Sign & 4.82 & 10.51 & 3.72 & 8.29 \\ 
  Wilcoxon Signed Rank & 8.06 & 10.47 & 6.04 & 8.09 \\ 
  Normal Scores & 8.55 & 10.36 & 6.52 & 7.95\\ 
\hline
\end{tabular}

\caption{Sensitivity analysis for matched data.  Each cell of the table
  represents a different test statistic for testing the null of no effect of a
  fish diet on mercury concentration; the first two columns give results for the
  full matched sample of 1,672 pairs, while the third and fourth columns give
  results for the smaller sample from 2015-2016 alone.  The number in each cell
  is the smallest degree of unmeasured confounding $\Gamma$ necessary in the
  sensitivity analysis model before the test no longer rejects at the
  $\alpha = 0.05$ level. \label{tab:sens_all}}
\end{table}

Repeating the same test many times with different test statistics, as
in \cref{tab:sens_all}, may lead to problems with Type I error control.  In practice one should select a single test
statistic in advance, possibly based on a pilot sample
\citep{heller_split_2009}.  We show the results of several tests here to
illustrate the impact of the choice of test statistic and complement the
discussion in \cref{sec:simulations}.

Several interesting patterns are clear in the full-sample results of
\cref{tab:sens_all}.  First, regardless of the score function used, the uniform
version of the test is less sensitive to bias from unmeasured confounding than the fixed-sample
version.  This pattern is consistent with \cref{th:unif_ds}, which tells us that
in large samples the uniform test should perform at least as well as any
fixed-sample test it incorporates.  Second, the performance of the uniform test
across score functions varies much less than the performance of the fixed-sample
version across score functions.  In particular, the sign test performs
substantially worse than any other test examined in the fixed-sample case, but
is comparable to the other score functions in
the uniform setting, corroborating the evidence from simulations in
\cref{sec:simulations}. In this dataset, as in the simulations, adapting over
many different truncated statistics appears to compensate for the deficiencies
of the fixed-sample sign test.

Finally, we consider the role of sample size by analyzing the
subset of the matched dataset consisting of respondents from the
final two-year period (2015-2016), a total of 190 pairs.  The final two columns
of \cref{tab:sens_all} repeat the analysis for this smaller dataset.  The same
pattern of results is observed, with the uniform test outperforming the
fixed-sample test for each score function, and the sign test performing best
among uniform tests.  Although the benefits of uniform testing articulated in
\cref{th:unif_ds} relate to asymptotic performance in large samples, uniform
tests may also offer substantial improvement in datasets of only moderate size.

\section{Conclusion and future work}\label{sec:conclusion}


We have focused on continuous outcomes, although minor modifications, described in \cref{sec:ties}, can handle cases when ties  are present but rare as in \cref{sec:data}.  However, outcomes with relatively few unique
values require alternative methodology. In such cases, the random walk
$\brace{T_n^\star(x)}_{x \in (0,1)}$ for tied data defined in \cref{sec:ties} will have few steps, at most the
number of unique values of the outcome, with each step comprised of many
individual observations, namely all those pairs with absolute outcome equal to a
given value. In the sequential analysis literature, such random walks are
handled well by group sequential designs \citep{pocock_group_1977,
  obrien_multiple_1979, lan_discrete_1983, jennison_group_2000}. An application
to uniform general signed rank tests may yield promising future results.

Another interesting avenue is theoretical properties
besides design sensitivity for uniform general signed rank tests. 
\citet{rosenbaum2015bahadur} characterizes the Bahadur efficiency of 
sensitivity analyses based on signed rank tests, allowing comparison of 
asymptotic performance for different test statistics at values of $\Gamma$ 
strictly smaller than the design sensitivity.  Deriving Bahadur slopes for
uniform signed rank tests would shed light on their relative
performance at small values of $\Gamma$.
Additionally,
\citet[\S 6]{lehmann_testing_2005} discuss the locally most powerful
property of general signed rank tests against particular families of
alternatives determined by the function $\varphi$. The uniform test chooses
adaptively from among a family of related $\varphi$ functions, and the implications for local
optimality in the sense discussed by Lehmann and Romano deserve attention.

\section{Acknowledgments}

We thank Eli Ben-Michael for the conversation that sparked this project. Howard
thanks Office Of Naval Research (ONR) Grant N00014-15-1-2367.

This is a pre-copyedited, author-produced version of an article accepted for publication in Biometrika following peer review. The version of record:

Howard, S.R., and Pimentel, S.D. (2021). The uniform general signed rank test and its design sensitivity. Biometrika 108, 381-396.

is available online at: \url{https://academic.oup.com/biomet/article-abstract/108/2/381/5911093}.

\setlength{\bibsep}{0pt plus 0.3ex}

\bibliographystyle{agsm}
\bibliography{rank_uniform}

\appendix

\section{Proof of Theorem 1}\label{sec:proofthm1}

  Throughout the proof we condition on $\Fcal$ implicitly in probability
  statements. Let $S_i = \indicator{Y_{(i)} > 0}$ for $i \in [n]$, so that
  $T_n\brace{k/(n+1)} = \sum_{i=n+1-k}^n c_i S_i$ for each $k \in [n]$. By
  Proposition 1, under the worst-case distribution in $H_0(\Gamma)$,
  $(S_i)_{i \in [n]}$ are distributed as $n$ i.i.d.\ Bernoulli($\rho_\Gamma$)
  random variables. The moment-generating function of 
  $c_i S_i$ is
  \begin{align}
    E\left( e^{\lambda c_i S_i}\right) = 1 + \rho_\Gamma (e^{c_i \lambda} - 1)
    \label{eq:bernoulli_mgf}
  \end{align}
for all $ \lambda \in \R$.  Now define $(L_k)_{k=0}^n$ by $L_0 = 1$ and, for $k \in [n]$,
  \begin{align}
    L_k &= \exp\ebracket{
          \lambda_n T_n\pfrac{k}{n+1}
          - \hspace{-0.6em}\sum_{i=n+1-k}^n
            \log\ebrace{1 + \rho_\Gamma (e^{c_i \lambda_n} - 1)}}
      = \hspace{-0.6em} \prod_{i=n+1-k}^n
          \frac{e^{\lambda_n c_i S_i}}{1 + \rho_\Gamma (e^{c_i \lambda_n} - 1)}.
          \label{eq:Lk_defn}
  \end{align}
From \eqref{eq:bernoulli_mgf} and \eqref{eq:Lk_defn} we see that
  $E\condparen{L_k}{S_n, S_{n-1}, \dots, S_{n+2-k}} = L_{k-1}$, so $L_k$
  is a nonnegative martingale with respect to the natural filtration defined by
 $S_n, S_{n-1}, \dots, S_1$. By Ville's maximal inequality for
  nonnegative supermartingales (\citealp{ville_etude_1939};
  \citealp{durrett_probability:_2013}, Exercise 5.7.1),
  \begin{align*}
    \alpha &\geq \text{pr}\eparen{ L_k \geq \alpha^{-1}\text{ for some }k \in [n] } = \text{pr}\ebrace{
            T_n\pfrac{k}{n+1} \geq f_{\alpha,n}\pfrac{k}{n+1}\text{ for some }k \in [n] 
         } \\
      &= \text{pr}\ebrace{
           T_n(x) \geq f_{\alpha,n}(x)        \text{ for some } x
             \in \ebrace{\frac{1}{n+1},  \frac{2}{n+1}, \ldots,  \frac{n}{n+1}}} \\
      &= \text{pr}\ebrace{ T_n(x) \geq f_{\alpha,n}(x)   \text{ for some } x
             \in (0,1)}.
\end{align*}
The final equality follows since values
$x = 1/(n+1), 2/(n+1), \dots, n/(n+1)$ capture all distinct values of
both $T_n(x)$ and $f_{\alpha,n}(x)$ for $x \geq 1/(n+1)$; adding the region
$0 < x < 1/(n+1)$ does not change the overall probability since $T_n(x) = 0$
here and $f_{\alpha,n}(x)$ is strictly positive.

\section{Handling ties}\label{sec:ties}

Under the assumption that outcomes are drawn from a continuous distribution,
ties among outcome observations occur with probability zero.  In practice
however, tied outcome data may arise in a variety of settings.  In this section
we discuss how to adapt the results of the paper to the setting of ties.

Let $Y_{(1)}, \ldots, Y_{(n)}$ be the outcome data ordered in any way so that
$|Y_{(1)}| \leq |Y_{(2)}| \leq \ldots |Y_{(n)}|$.  This ordering is
not unique when ties are present; in such cases, choose one such ordering
arbitrarily. We may still apply the methods described in the paper directly to
conduct a test.  The test statistic and the uniform bound are clearly defined
given our chosen ordering of outcomes, and Theorem 1 holds since
no aspect of its proof depends on the absence of ties. We remark that it is
reasonable to expect $\text{pr}(Y_i = 0) > 0$ in the presence of ties; however, this
only reduces $\text{pr}\condparen{Y_i > 0}{\Fcal}$, so Proposition 1 and
Theorem 1 continue to hold.

However, the version of our uniform test in Theorem 1 depends on
the ordering we choose, perhaps arbitrarily, for $(Y_{(i)})$.  To remove this
undesirable feature of the procedure, we may instead use a generalization of
$T_n(x)$ that is invariant to the specific choice of ordering in the tied
setting. We write $T^\star_n(x)$ for this new test statistic.  The
intuition for $T^\star_n$ comes from recognizing that when ties are
present, one or more test statistics in the family
$\brace{T_n(x)}_{x \in (0,1)}$ are partial sums that include some terms with a
particular absolute value but exclude others with the same absolute value, and
that the scores associated with these terms may be different from one another.
We obtain the family $\brace{T^\star_n(x)}_{x \in (0,1)}$ by replacing the score
for each tied value by the average of scores for all indices involved in the
tie, and by adding all these terms together to the partial sum rather than
allowing partial sums that contain some terms but not all.

Formally, define
$\mathcal{J}_i = \ebrace{j \in [n]: \ \eabs{Y_{(j)}} = \eabs{Y_{(i)}}}$, the set
of ranks with equal absolute pair differences to the $i$th ranked pair. Let
$m(i) = \min \mathcal{J}_i$, the lowest rank within the tied group containing
the $i$th ranked pair. Now define the test statistic
\begin{align*}
T^\star_{\varphi}(x) =
  \sum_{\ebrace{i: m(i) \geq (1-x)(n+1)}}c_i^\star1_{Y_{(i)} > 0},
\qquad
c_i^\star = \eabs{\mathcal{J}_i}^{-1}
  \sum_{j \in \mathcal{J}_i}\varphi\eparen{\frac{j}{n+1}}.
\end{align*}

When a group of pairs share the same absolute outcome value, this test statistic
treats all these pairs as a single unit, including either all or none of them in
the partial sum, and assigning each a score equal to the average score across
all members in the tied set.  If there are only $k < n$ distinct
absolute outcome values, there are only $k$ distinct nontrivial values for
$T^\star_n(x)$; however, if no ties are present, $T^\star_n$ is
identical to $T_n$.

We obtain a uniform boundary for $T^\star_n(x)$ by substituting $c^\star_i$ for
$c_i$ in (3) and (4), yielding new
quantities ${\sigma^\star_n}^2(x)$ and $f^\star_{\alpha,n}(x)$. In the absence
of ties, the quantities ${\sigma^\star_n}^2$, and $f^\star_{\alpha,n}$ coincide
with the original quantities $\sigma_n^2$, and $f_{\alpha,n}$. However, the
quantities $(c_i^\star)_{i=1}^n$ are random, unlike $(c_i)$, hence
$\sigma_n^\star$, and $f^\star_{\alpha,n}$ are random as well. This requires no
real change to the analysis, since these quantities are $\Fcal$-measurable and
we condition on $\Fcal$ throughout the proof of Theorem 1. As the
reader may expect, the new boundary $f^\star_{\alpha,n}$ yields a valid uniform
test of the sensitivity null $H_0(\Gamma)$ using the order-invariant test
statistic $T_n^\star$.

\begin{theorem}\label{th:tie_bound}
  Under $H_0(\Gamma)$, for any $\Fcal$-measurable $x_0 > 0$ such that
  ${\sigma^\star_n}^2(x_0) > 0$ a.s., and any $\alpha \in (0,1)$, we have
  $\text{pr}\condbrace{T^\star_n(x) \geq
    f^\star_{\alpha,n}(x) \text{ for some }x \in (0, 1) }{\Fcal} \leq \alpha$.
\end{theorem}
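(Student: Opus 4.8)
The plan is to reduce Theorem~\ref{th:tie_bound} to the already-established Theorem~\ref{th:uniform_bound} by exploiting the fact that, conditional on $\Fcal$, the tie structure is completely determined. First I would condition on $\Fcal$ throughout, as in the original proof. The key observation is that, once we condition on $\Fcal$, the absolute values $\abs{Y_{(1)}}, \dots, \abs{Y_{(n)}}$ are fixed, and hence the tied groups $\mathcal{J}_i$, the representatives $m(i)$, and the averaged scores $c_i^\star$ are all $\Fcal$-measurable constants. The only remaining randomness is in the signs $S_i = \indicator{Y_{(i)} > 0}$, which under the worst-case distribution in $H_0(\Gamma)$ are i.i.d.\ $\Bernoulli(\rho_\Gamma)$, exactly as before. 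Thus, \emph{conditional on $\Fcal$}, the process $T^\star_n(x)$ is of precisely the same form as $T_n(x)$, but with the deterministic scores $c_i$ replaced by the (now fixed) scores $c_i^\star$ and with the index set organized into the tie-groups.

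Next I would verify that the worst-case argument of Proposition~\ref{th:worst_case} applies verbatim to $T^\star_n$. The proof of that proposition relies only on the monotonicity of the rejection event in each sign $S_i$ and the odds-ratio bound on $\P\condparen{Y_i > 0}{\Fcal}$; since $T^\star_n(x)$ is still nondecreasing in each $S_i$ (the scores $c_i^\star$ are nonnegative, being averages of nonnegative $\varphi$-values), and since ties only decrease $\P\condparen{Y_i > 0}{\Fcal}$ as noted in Section~\ref{sec:ties}, the rejection probability is again maximized at $\P\condparen{Y_i > 0}{\Fcal} = \rho_\Gamma$. With the worst-case distribution in hand, I would then reconstruct the martingale: define $c_i^\star$-analogues of the random variables in the proof, form the product martingale $L_k^\star$ using $c_i^\star$ in place of $c_i$, and check that the conditional-expectation telescoping identity $\E\condparen{L_k^\star}{S_n, \dots, S_{n+2-k}} = L_{k-1}^\star$ still holds, which it does because the signs are still independent and the moment-generating-function identity~\eqref{eq:bernoulli_mgf} holds with $c_i$ replaced by the constant $c_i^\star$.

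The only genuinely new bookkeeping concerns the grouping of terms. Because $T^\star_n(x)$ includes either all or none of the indices in a tied group, the natural index for the random walk is the number of \emph{distinct} absolute values rather than $n$; I would therefore index the martingale by the representatives $m(i)$ and note that the reachable thresholds $x$ correspond exactly to the distinct values of $m(i)/(n+1)$. The martingale increments then aggregate over each tied block, but since the signs within a block are still independent, the telescoping identity and the resulting supermartingale property are unaffected. Applying Ville's inequality to $L_k^\star$ and translating the event $\{L_k^\star \geq \alpha^{-1}\}$ into $\{T^\star_n(x) \geq f^\star_{\alpha,n}(x)\}$ via the definition of $f^\star_{\alpha,n}$ gives the claimed bound, after which I would integrate out $\Fcal$ (trivially, since the bound holds for a.e.\ $\Fcal$).

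I expect the main obstacle to be purely notational rather than conceptual: carefully defining the filtration and the partial sums so that the ``all-or-none'' inclusion of tied groups is respected, and confirming that the distinct reachable values of $x$ for $T^\star_n$ are captured by the representatives $m(i)/(n+1)$ exactly as the values $k/(n+1)$ captured all distinct thresholds in the untied case. Once the correct index set is fixed, the argument is a direct transcription of the proof of Theorem~\ref{th:uniform_bound}, and the measurability of $(c_i^\star)$, ${\sigma^\star_n}^2$, and $f^\star_{\alpha,n}$ with respect to $\Fcal$—which the paper has already emphasized—ensures that nothing in the conditional argument is disturbed by the randomness of the scores.
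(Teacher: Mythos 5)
Your proposal is correct in substance, but it takes a genuinely different route from the paper. You rebuild the martingale argument directly for $T^\star_n$, re-indexing the random walk by tied blocks: since the signs within a block are independent, the block-aggregated increments of your $L^\star_k$ are products of independent mean-one factors, so the grouped process is again a nonnegative martingale and Ville's inequality applies to it. The paper avoids this re-indexing entirely: it introduces the auxiliary \emph{ungrouped} statistic $\widetilde{T}_n(x) \defineas \sum_{i=\ceil{(1-x)(n+1)}}^n c^\star_i \indicator{Y_{(i)} > 0}$, notes that the proof of \cref{th:uniform_bound} goes through verbatim with $c^\star_i$, $\sigma^\star_n$, $f^\star_{\alpha,n}$ in place of their unstarred counterparts ($\Fcal$-measurability of the starred quantities being the only new point, since everything is conditional on $\Fcal$), and then concludes by pointwise domination: because $m(i) \leq i$ and the scores are nonnegative, the index set of $T^\star_n(x)$ is a subset of that of $\widetilde{T}_n(x)$, so $T^\star_n(x) \leq \widetilde{T}_n(x)$ for every $x$ and the rejection event for $T^\star_n$ is contained in the one already controlled. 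One detail in your route needs more care than you give it: you claim the representatives $m(i)/(n+1)$ capture the relevant thresholds ``exactly as'' $k/(n+1)$ did in the untied case, but the situations are not symmetric---$f^\star_{\alpha,n}(x)$ as defined still sums over \emph{all} indices $i \geq \ceil{(1-x)(n+1)}$, so it changes at every $i/(n+1)$, including inside a tied block where $T^\star_n$ does not change. To pass from the bound at block thresholds (which is what Ville gives you) to the claim for all $x \in (0,1)$, you need the additional observation that $f^\star_{\alpha,n}$ is nondecreasing in $x$ (its summands are nonnegative) while $T^\star_n$ is constant between block thresholds and zero below the first one, so any crossing at an intermediate $x$ forces a crossing at the preceding block threshold. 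With that observation your argument closes. The trade-off: the paper's domination trick is shorter precisely because it never confronts this mismatch, while your direct construction shows the grouped process is itself a supermartingale, which would in principle support a slightly tighter boundary whose compensator accumulates only over whole blocks---though the theorem as stated does not require it.
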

\begin{proof}
  Write
  $\widetilde{T}_n(x) = \sum_{i=\ceil{(1-x)(n+1)}}^n c^\star_i
  \indicator{Y_{(i)} > 0}$; this is the same as $T_n(x)$ with $c^\star_i$
  substituted for $c_i$. Repeating the proof of Theorem 1 with
  $\sigma^\star_n$ and $f^\star_{\alpha,n}$ in place of their unstarred
  counterparts, we obtain
  \begin{align}
    \text{pr}\condbrace{
      \widetilde{T}_n(x) \geq f^\star_{\alpha,n}(x)\text{ for some }x \in (0, 1) 
    }{\Fcal} \leq \alpha. \label{eq:Ttilde_bound}
  \end{align}
  Since $m(i) \leq i$ and $c_i \geq 0$ for all $i$, we have
  $T^\star_n(x) \leq \widetilde{T}_n(x)$ for all $x$, which implies
  the result together with \eqref{eq:Ttilde_bound}.
\end{proof}

\section{The redescending score function}\label{sec:redescending}

In this section we discuss an additional, {redescending}
score function, defined as follows in terms of three positive integer parameters
$\underline{m} \leq \overline{m} \leq m$:
\begin{align}
\label{eqn:redescending}
\varphi(q) = \sum_{l=\underline{m}}^{\overline{m}} \frac{l}{m} \binom{m}{l}
  q^{l-1} (1-q)^{m-l}.
\end{align}
The function's name reflects its shape; it rises as $q$ increases from zero,
as the Wilcoxon signed rank score and normal score functions do, but falls
back to zero as $q$ approaches one, unlike the score functions in the main manuscript. The
resulting statistic puts more weight on pairs with larger absolute differences,
but excludes the most extreme outlying observations. In some cases it may be
preferable to remove potential outliers from the data in advance rather than to
use a redescending score function; for more discussion, see
\citet{huber_robust_2009}.

The score function for general $(m,\underline{m},\overline{m})$ is due
originally to \citet{rosenbaum_new_2011}, who introduced a family of test
statistics defined by three positive integers
$\underline{m} \leq \overline{m} \leq m$ where $m$ is no greater than the
overall sample size $n$.  Each statistic takes a sum over all size-$m$ subsets
of observations; within each subset, observations are ranked by absolute
magnitude and the number of positive differences among observations with ranks
$\underline{m}$ through $\overline{m}$ inclusive is added to the sum.  This
family includes the sign statistic ($m = \underline{m} = \overline{m} = 1$) and
a close approximation to the Wilcoxon signed rank statistic
($m = \underline{m} = \overline{m} = 2$) as special cases. These statistics are
general signed rank statistics, and although their exact score functions may
be computationally intensive and depend on sample size in general, Rosenbaum
gives \eqref{eqn:redescending} as a convenient asymptotic approximation to the
scores.  The redescending behavior described above occurs whenever
$\overline{m} < m$, and \citet{rosenbaum_weighted_2014} studied the specific
setting $(m,\underline{m},\overline{m}) = (20, 12, 19)$.

Using the redescending score function within the uniform general signed rank
test as we have defined it in \eqref{eqn:ugsr_def} is not problematic from a
technical perspective, but the resulting test runs contrary to the goal of the
redescending function.  In particular, the truncated score functions
$\phi(q)1_{q \geq 1-x}$ will eventually restrict attention to the most extreme
pair differences, which the redescending score function is designed to emphasize
less relative to other observations; this is in contrast to the other score
functions discussed in the manuscript, which are all nondecreasing.  A more
natural generalization of the truncation idea in the redescending case is to
truncate regions of the domain associated with lower score $\varphi(x)$ first.
We now show how to generalize the test statistic to truncate in this manner.

Let $D$ be a mapping between real numbers in the closed unit interval and
subsets of the unit interval such that $D(0) = \{\}$, $D(1) = [0,1]$, and
$D(x_1) \subseteq D(x_2)$ whenever $x_1 < x_2$.  We now reformulate the test
statistic and the uniform bound as functions of both $x$ and $D$:
\begin{align}
\label{eqn:ugsrDx_def}
T_n(x; D)
&= \sum_{\left\{i: \,\, {i}/{(n+1)} \in D(x)\right\}}
    \varphi\eparen{\frac{i}{n+1}} 1_{Y_{(i)} >0},\\
f_{\alpha,n}(x; D) &=
  \frac{1}{\lambda_n} \ebracket{
    \log\pfrac{1}{\alpha} +
    \sum_{\left\{i:\,\, {i}/{(n+1)} \in D(x)\right\}}
      \log\ebrace{1 + \rho_\Gamma (e^{c_i \lambda_n} - 1)}
  },
\label{eqn:uniformDx_bound}
\end{align}
with a similar modification to the sum in $\sigma^2_n(x_0)$ which appears in the
definition of $\lambda_n$. Intuitively, $D(x)$ defines an order of truncation
for the $n$ observed ranks.  Definitions \eqref{eqn:ugsr_def} and
\eqref{eq:uniform_bound} are special cases obtained by choosing
$D(x) = [1 - x, 1]$, which corresponds to truncating the observations with
smallest absolute ranks first.  For bounded redescending score functions such as
(\ref{eqn:redescending}), we suggest the following alternative choice of $D$:
\[
D(x) = \left\{q: \varphi(q) > (1-x)\max_{0 < q < 1}\varphi(q) \right\}.
\]
This $D$-function specifies that observations with the lowest scores will be trimmed first from the test.  For redescending score functions, this means specifically that the largest observations will be trimmed before certain intermediate observations, a desirable property in cases where the extreme observations may be unreliable.


We now discuss how to generalize the theory in the main manuscript to give
results for test statistic (\ref{eqn:ugsrDx_def}) and boundary
(\ref{eqn:uniformDx_bound}).  For any given sample, function $D(x)$ defines a
specific order of truncation among the $n$ observations.  Taking any
$i,j \in [n]$, $i \neq j$, we say that $i <_D j$, or $i$ is truncated later than
$j$, if there exists some $x \in (0,1)$ such that $i \in D(x)$ and
$j \notin D(x)$.  Note that $i <_D j$ and $j <_D i$ cannot both hold since
$D(x_1) \subseteq D(x_2)$ for any $x_1 < x_2$.  If neither $i <_D j$ nor
$j <_D i$ then $i \sim_D j$, meaning that $i$ and $j$ lie in the same
equivalence class with respect to truncation.

Generalizing Theorem 1 to hold for $T_n(x; D)$ and $f_{\alpha,n}(x;D)$ in place of $T_n(x)$ and $f_{\alpha,n}(x)$ is straightforward.  In fact, if no pair of observations $i,j \in [n]$ lies in the same equivalence class with respect to truncation, so that either $i <_D j$ or $j <_D i$ in all cases, then the arguments require no changes except for substituting the more general definitions, since $T_n(x;D)$, like $T_n(x)$, is a random walk with $n$ distinct partial sums.

If there exist one or more pairs of observations $i,j \in [n]$ such that
$i \sim_D j$, then $T_n(x;D)$ does not have $n$ distinct partial sums; since
some observations are always truncated together, there will only be $k$ distinct
partial sums for some $k < n$.  Since the proof of Theorem 1 relies on a random
walk with $n$ distinct partial sums, we construct a new random walk with $n$
distinct partial sums, $k$ of which are the partial sums of $T_n(x;D)$.  Let
$\pi:[n] \to [n]$ be a permutation of the indices such that $\pi(i) < \pi(j)$
whenever $i <_D j$.  Define
\begin{align*}
T_n(j;D) &= \sum_{\ebrace{i: \pi(i) \leq \pi(j)}}
  \varphi\left(\frac{i}{n+1}\right)1_{Y_{(i)} > 0},\\
f_{\alpha,n}(j;D) &= \frac{1}{\lambda_n} \ebracket{
    \log\pfrac{1}{\alpha} +
    \sum_{
    \ebrace{i: \pi(i) \leq \pi(j)}}
      \log\ebrace{1 + \rho_\Gamma (e^{c_i \lambda_n} - 1)}
  }.
\end{align*}
The dependence on $D$ here is through the restriction that the ordering $\pi$
respect the ordering given by $<_D$. Consider the set
$\mathcal{I} = \ebrace{i \in [n]: \pi(i) < \pi(j) \text{ or } j <_D i \text{ for
    all } j \in [n] \setminus i}$.  Set $\mathcal{I}$ contains exactly one
observation from each of the $k$ equivalence classes, choosing the first element
by ordering $\pi$ in equivalence classes with more than one element, and the
values $\lbrace T_n(i;D) : i \in \mathcal{I} \rbrace$ are exactly the $k$
distinct partial sums of $T_n(x;D)$.  Thus the partial sums of $T_n(x;D)$ form a
subset of the partial sums $\lbrace T_n(j;D) : j \in [n] \rbrace$.  In addition,
the values $\lbrace f_{n,\alpha}(i;D) : i \in \mathcal{I} \rbrace$ capture all
distinct values of $f_{\alpha,n}(x;D)$.

To finish the proof, it suffices to repeat our previous arguments for Theorem 1 replacing $T_n(x)$ and $f_{\alpha,n}(x)$ with $T_n(j;D)$ and $f_{n,\alpha}(j; D)$. Because the individual tests performed by $T_n(j;D)$ at different values $j$ are a superset of the individual tests performed by $T_n(x;D)$ at different values $x$, the rejection event for the latter uniform test is a subset of the rejection event for the former, and control of Type I error for the latter guarantees control of Type I error for the former.

Finally we comment on adaptations needed to generalize Theorem 2 for statistic (\ref{eqn:ugsrDx_def}) and (\ref{eqn:uniformDx_bound}).  We modify Assumptions 2--4 to impose conditions on $D$ as follows:

\stepcounter{assumption}

\begin{assumption} For all $x \in (0,1]$, $q \mapsto \varphi(q)1_{q \in D(x)}$ is discontinuous on a set of Lebesgue measure zero.
\label{assmp:as_cont_allX}
\end{assumption}
\begin{assumption}  For all $x \in (0,1)$, there exists a constant $a_x \in [0, 1 / 2)$ such that $q \mapsto \varphi(q)1_{q \in D(x)}$ is
  nonincreasing on $(0, a_x)$, nondecreasing on $(1 - a_x, 1)$, and bounded on
  $(a_x, 1 - a_x)$.
  \label{assmp:monotonish_allX}
\end{assumption}
\begin{assumption}
 $\int _{D(x)} \varphi(y) \d y > 0$ for all $x > 0$.
 \label{assmp:nonzero_allX}
\end{assumption}
Under these conditions, the design sensitivity of the uniform test based on (\ref{eqn:ugsrDx_def}) and (\ref{eqn:uniformDx_bound}) is now the supremum of $\pi(x;D)/[1 - \pi(x;D)]$ over $x \in (0,1)$, where
\[
\pi(x;D) =   
    \frac{
      \int_0^\infty \varphi\brace{G(y) - G(-y)} \indicator{G(y) - G(-y) \in D(x)}
      \d G(y)
    }{\int_{D(x)} \varphi(y) \d y}.\]
 The proof of this result is largely the same as the argument for Theorem 2 with appropriate substitutions.  Assumptions \ref{assmp:as_cont_allX} and \ref{assmp:monotonish_allX}  are needed when applying Lemma 2 to test statistic $T_n(x;D)$ to ensure that the regularity conditions needed for convergence apply to the truncated score function, and Assumption \ref{assmp:nonzero_allX} is needed to ensure $\pi(x;D)$ has a nonzero denominator.
  
\section{Additional proofs}\label{sec:prove_lemmas}

\subsection{Proof of \cref{th:worst_case}}

Throughout the proof, we condition on $\Fcal$, dropping it from the notation for
simplicity. For each $i \in [n]$, write
$S_i = \indicator{Y_{(i)} > 0}$,
$X_i = T_n\brace{i/(n+1)} = \sum_{j=n-i+1}^n c_j S_j$, and
$a_i = f_{\alpha,n}\brace{i/(n+1)}$. Under $H_0(\Gamma)$, the $(S_i)$ are
independent with $1 / (1 + \Gamma) \leq \text{pr}(S_i = 1) \leq \Gamma / (1 +
\Gamma)$. Let $p_i = \text{pr}(S_i = 1)$. We wish to show that the rejection
probability $\text{pr}(X_i \geq a_i\text{ for some }i \in [n] )$ is maximized when
$p_i = \Gamma / (1 + \Gamma)$ for all $i \in [n]$.

Write $S = (S_1, \dots, S_n)^T$, a random vector in
$\brace{0,1}^n$. For $s \in \brace{0,1}^n$,
$\text{pr}(S = s) = \prod_{i=1}^n p_i^{s_i} (1-p_i)^{1 - s_i}$. Let
$\Rcal = \ebrace{s \in \brace{0,1}^n: \sum_{j=n-i+1}^n c_j s_j \geq a_i
  \text{ for some } i \in [n]}$. This set represents the rejection event, in the
sense that the test rejects if and only if $S \in \Rcal$. We will show that
$\text{pr}(S \in \Rcal)$ is increasing in $p_i$ for each $i \in [n]$, from which it
follows that the rejection probability is maximized when $p_i$ is maximized for
each $i$.

We claim that if $s \in \Rcal$ and $s' \geq s$ elementwise, then $s' \in
\Rcal$. To see this, observe that $s \in \Rcal$ implies that we can choose
$i \in [n]$ such that $\sum_{j=n-i+1}^n c_j s_j \geq a_i$. Then
$\sum_{j=n-i+1}^n c_j s_j' \geq \sum_{j=n-i+1}^n c_j s_j \geq a_i$, so
$s' \in \Rcal$.

Now write
$\text{pr}(S \in \Rcal) = \sum_{s \in \Rcal} \prod_{i=1}^n p_i^{s_i} (1 - p_i)^{1 -
  s_i}$, and differentiate with respect to $p_k$ for any $k \in [n]$:
\begin{align}
  \frac{\d}{\d p_k} \text{pr}(S \in \Rcal)
  &= \sum_{s \in \Rcal} \ebrace{
       (2s_k - 1) \prod_{i \neq k} p_i^{s_i} (1 - p_i)^{1 - s_i}} \nonumber\\
  &= \sum_{\substack{s \in \Rcal \\ s_k = 1}} \pi^{(k)}(s)
     - \sum_{\substack{s \in \Rcal \\ s_k = 0}} \pi^{(k)}(s),
     \label{eq:two_sums}
\end{align}
where $\pi^{(k)}(s) = \prod_{i \neq k} p_i^{s_i} (1 - p_i)^{1 - s_i}$. For each
$s \in \Rcal$ with $s_k = 0$, there corresponds an $s'$ that is identical except
for $s'_k = 1$, i.e., $s'_i = s_i \indicator{i \neq k} + \indicator{i = k}$, and
this $s' \in \Rcal$ by the claim above. Also, $\pi^{(k)}(s) =
\pi^{(k)}(s')$. Hence each term in the second sum of \eqref{eq:two_sums} is
canceled by a term in the first sum. We conclude
$\frac{\d}{\d p_k} \text{pr}(S \in \Rcal) \geq 0$ as desired, completing the
proof.

We remark that an alternative proof could use Holley's inequality for
distributions over finite distributive lattices \citep[Sections 2.10,
4.7.2]{rosenbaum_observational_2002}. We have opted for the direct proof above
to keep the paper more self-contained.

\subsection{Proof of \cref{th:f_approximation}}

The following result ensures convergence of certain Riemann sums for some
unbounded functions, and is necessary to analyze the asymptotic behavior of
$f_{\alpha,n}$.

\begin{lemma}\label{th:riemann}
  Suppose $\varphi: (0, 1) \to [0, \infty)$ is discontinuous on a set of measure
  zero, $\int_0^1 \varphi(x) \d x < \infty$, and there exists a constant
  $a \in [0, 1 / 2)$ such that $\varphi$ is nonincreasing on $(0, a)$,
  nondecreasing on $(1 - a, 1)$, and bounded on $(a, 1 - a)$. Then
  \begin{align*}
    \lim_{n \to \infty} \frac{1}{n} \sum_{i=1}^n \varphi\pfrac{i}{n+1}
    = \int_0^1 \varphi(x) \d x.
  \end{align*}
\end{lemma}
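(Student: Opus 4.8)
The plan is to partition the interval into a bounded middle part and two possibly-unbounded tails, matching the structure of hypothesis (P3), and to treat each part with a tool suited to it. Concretely, I partition the index set $\{1,\dots,n\}$ into $I_L=\{i:\tfrac{i}{n+1}<a\}$, $I_M=\{i:a\le\tfrac{i}{n+1}\le1-a\}$, and $I_R=\{i:\tfrac{i}{n+1}>1-a\}$, and I show that the three corresponding pieces of the normalized sum converge to $\int_0^a\varphi$, $\int_a^{1-a}\varphi$, and $\int_{1-a}^1\varphi$ respectively. Since $\tfrac{n+1}{n}\to1$, it is equivalent and notationally cleaner to prove these limits with the prefactor $\tfrac1{n+1}$ in place of $\tfrac1n$, which matches the spacing $\tfrac1{n+1}$ of the evaluation points $\tfrac{i}{n+1}$.

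For the middle part, I would invoke the Lebesgue criterion: by (P3) the restriction of $\varphi$ to $[a,1-a]$ is bounded, and by (P2) its discontinuities form a null set, so $\varphi$ is Riemann integrable there. The piece $\tfrac1{n+1}\sum_{i\in I_M}\varphi(\tfrac{i}{n+1})$ is then a right-endpoint Riemann sum over a tagged partition of mesh $\tfrac1{n+1}\to0$, hence converges to $\int_a^{1-a}\varphi$; the at most two terms whose evaluation point lands exactly on $a$ or $1-a$ are a fixed finite value times $\tfrac1{n+1}$ and vanish.

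The tails are the crux, because $\varphi$ may blow up as $x\downarrow0$ or $x\uparrow1$, so the ordinary Riemann-sum theorem does not apply. On the left tail $\varphi$ is nonincreasing, which gives, for each index $i$ in $I_L$, the two-sided comparison $\int_{i/(n+1)}^{(i+1)/(n+1)}\varphi\le\tfrac1{n+1}\varphi(\tfrac{i}{n+1})\le\int_{(i-1)/(n+1)}^{i/(n+1)}\varphi$. Summing these comparisons over $I_L$ (discarding a single nonnegative boundary term in the lower estimate, so that every subinterval used stays inside the monotone region) telescopes the integrals, and writing $k_n$ for the largest index in $I_L$ I obtain
\[
\int_{1/(n+1)}^{k_n/(n+1)}\varphi\ \le\ \tfrac1{n+1}\sum_{i\in I_L}\varphi\!\left(\tfrac{i}{n+1}\right)\ \le\ \int_0^{k_n/(n+1)}\varphi.
\]
Because $k_n/(n+1)\to a$ and, crucially, $\int_0^{1/(n+1)}\varphi\to0$ by the finiteness hypothesis $\int_0^1\varphi<\infty$, both ends tend to $\int_0^a\varphi$ and the left-tail piece is squeezed to that limit. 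The right tail is handled identically via the reflection $x\mapsto1-x$, using that $\varphi$ is nondecreasing on $(1-a,1)$.

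Adding the three limits and multiplying by $\tfrac{n+1}{n}\to1$ yields $\tfrac1n\sum_{i=1}^n\varphi(\tfrac{i}{n+1})\to\int_0^a\varphi+\int_a^{1-a}\varphi+\int_{1-a}^1\varphi=\int_0^1\varphi$. I expect the monotone-tail estimate to be the main obstacle: everything hinges on using monotonicity to sandwich the tail sums between integrals, and on the integrability hypothesis to control both the sliver $\int_0^{1/(n+1)}\varphi$ and, implicitly, the single largest term $\tfrac1n\varphi(\tfrac1{n+1})$, which is exactly where a naive Riemann-sum argument would break down.
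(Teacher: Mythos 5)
Your proof is correct, and it rests on the same three-piece decomposition the paper uses (a bounded middle handled by Riemann integrability, plus two monotone tails), but your treatment of the tails---the crux, as you rightly say---goes by a genuinely different route. The paper extends each tail to a monotone function on all of $(0,1)$ (e.g.\ the left piece equals $\varphi$ below $a$ and zero above, hence is nonincreasing everywhere) and proves a separate lemma for nonnegative monotone integrable functions via the dominated convergence theorem: the step function $\varphi_n$ built from the values $\varphi(i/(n+1))$ satisfies $\varphi_n \leq \varphi$ precisely because of monotonicity, converges pointwise a.e.\ since monotone functions have at most countably many discontinuities, and DCT then gives $\int_0^1 \varphi_n \to \int_0^1 \varphi$. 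You instead run an elementary integral-test sandwich: monotonicity traps the normalized tail sum between $\int_{1/(n+1)}^{k_n/(n+1)}\varphi$ and $\int_0^{k_n/(n+1)}\varphi$, and the integrability hypothesis kills the sliver $\int_0^{1/(n+1)}\varphi \to 0$. The hypotheses enter in the same places in both arguments---monotonicity to compare sum with integral, integrability near the endpoint to control the boundary---but yours trades the measure-theoretic machinery for telescoping bookkeeping, which you handle correctly (dropping the one subinterval that could protrude past $a$, and noting the discarded term is nonnegative). In effect, the paper's domination $\varphi_n \leq \varphi$ is your one-sided comparison packaged for DCT, while your second integral bound replaces the pointwise a.e.\ convergence step; your version is more elementary, the paper's is arguably cleaner in that it handles each monotone piece in one stroke without index bookkeeping. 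Both proofs gloss over the same benign edge adjustments in the bounded middle piece, where boundedness makes the discrepancy $\Ocal(1/n)$.
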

\begin{proof}
  Write $\varphi = \varphi_1 + \varphi_2 + \varphi_3$ where
  $\varphi_1(x) = \varphi(x) \indicator{x < a}$,
  $\varphi_2(x) = \varphi(x) \indicator{a \leq x \leq 1 - a}$, and
  $\varphi_3(x) = \varphi(x) \indicator{x > a}$. Since $\varphi_2$ is bounded,
  it is Riemann integrable, so
  $n^{-1} \sum_{i=1}^n \varphi_2\brace{i/(n+1)} \to \int_0^1 \varphi_2(x) \d x$
  by standard Riemann integration theory, noting that
  $i/(n+1) \in ((i-1)/n, i/n)$ for each $i \in [n]$. For $\varphi_1$ and
  $\varphi_3$, we appeal to \cref{th:riemann_monotone} below to conclude that
  $n^{-1} \sum_{i=1}^n \varphi_k\brace{i/(n+1)} \to \int_0^1 \varphi_k(x) \d x$ for
  $k = 1, 3$. The result follows by linearity.
\end{proof}

\begin{lemma}\label{th:riemann_monotone}
  Suppose $\varphi: (0,1) \to [0, \infty)$ is monotone and
  $\int_0^1 \varphi(x) \d x < \infty$. Then
  \begin{align*}
    \lim_{n \to \infty} \frac{1}{n} \sum_{i=1}^n \varphi\pfrac{i}{n+1}
    = \int_0^1 \varphi(x) \d x.
  \end{align*}
\end{lemma}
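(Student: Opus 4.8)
The plan is to reduce to the case of a nondecreasing $\varphi$ and then to sandwich the Riemann-type sum between two integrals that both converge to $\int_0^1 \varphi$. First I would dispose of the nonincreasing case by reflection: setting $\psi(x) \defineas \varphi(1-x)$ produces a nondecreasing function with $\int_0^1 \psi(x)\,\d x = \int_0^1 \varphi(x)\,\d x$, and the reindexing $j = n+1-i$ gives $\sum_{i=1}^n \varphi\pfrac{i}{n+1} = \sum_{j=1}^n \psi\pfrac{j}{n+1}$. Hence it suffices to prove the claim for nondecreasing $\varphi$, which I assume from now on.

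For nondecreasing $\varphi$, I would partition $[0,1]$ into the $n+1$ subintervals $I_i \defineas \ebracket{\frac{i-1}{n+1}, \frac{i}{n+1}}$, $i = 1, \dots, n+1$, each of length $1/(n+1)$. Monotonicity gives $\varphi\pfrac{i-1}{n+1} \leq \varphi(x) \leq \varphi\pfrac{i}{n+1}$ for $x \in I_i$, so integrating over $I_i$ yields the two-sided bound
\[
  \frac{1}{n+1}\varphi\pfrac{i-1}{n+1} \leq \int_{I_i} \varphi(x)\,\d x \leq \frac{1}{n+1}\varphi\pfrac{i}{n+1}.
\]
Summing the right inequality over $i = 1, \dots, n$ and the left inequality over $i = 2, \dots, n+1$ (then reindexing $j = i-1$) produces the sandwich
\[
  \int_0^{n/(n+1)} \varphi(x)\,\d x \;\leq\; \frac{1}{n+1}\sum_{i=1}^n \varphi\pfrac{i}{n+1} \;\leq\; \int_{1/(n+1)}^1 \varphi(x)\,\d x.
\]

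To finish, I would note that since $\int_0^1 \varphi < \infty$, both the left and right integrals increase to $\int_0^1 \varphi(x)\,\d x$ as $n \to \infty$ (the truncation points $n/(n+1)$ and $1/(n+1)$ tend to the endpoints, so this is monotone convergence of the truncated integrals to the finite total). The squeeze theorem then gives $\frac{1}{n+1}\sum_{i=1}^n \varphi\pfrac{i}{n+1} \to \int_0^1 \varphi$, and multiplying through by $(n+1)/n \to 1$ replaces the $1/(n+1)$ normalization by the desired $1/n$.

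The only delicate point is that $\varphi$ may be unbounded as $x \uparrow 1$, so the standard Riemann-sum argument for bounded integrands does not apply directly, and one cannot simply invoke Riemann integrability as was done for the bounded middle piece $\varphi_2$ in the proof of \cref{th:riemann}. The integral-sandwich above circumvents this entirely: it never requires $\varphi$ to be bounded, only that the truncated integrals converge to $\int_0^1 \varphi$, which is exactly what the integrability hypothesis supplies. I expect this sandwiching step, together with verifying the convergence of the two truncated integrals, to be the crux of the argument.
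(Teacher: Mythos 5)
Your proof is correct, but it takes a genuinely different route from the paper's. The paper reduces to the nondecreasing case by the same reflection $x \mapsto \varphi(1-x)$, but then builds the step functions $\varphi_n(x) \defineas \varphi(i/(n+1))$ on $[i/(n+1), (i+1)/(n+1))$, whose integral equals the normalized sum, observes that $0 \leq \varphi_n \leq \varphi$ by monotonicity, invokes the fact that a monotone function has at most countably many discontinuities to get $\varphi_n \to \varphi$ pointwise almost everywhere, and concludes by the dominated convergence theorem. You instead exploit monotonicity in both directions to sandwich the normalized sum between the two truncated integrals $\int_0^{n/(n+1)} \varphi$ and $\int_{1/(n+1)}^1 \varphi$, and then only need that truncated integrals of a fixed integrable function converge to the full integral (monotone convergence), plus the squeeze theorem. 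Your argument is the more elementary one: it needs neither the dominated convergence theorem nor the countability of discontinuities of monotone functions, and it never constructs a convergent sequence of functions at all; the price is that it uses monotonicity twice (once for each side of the sandwich), whereas the paper uses it only once for domination, with a.e.\ convergence doing the rest. One small point worth making explicit if you write this up: since $\varphi$ is defined only on the open interval $(0,1)$, the endpoint values $\varphi(0)$ and $\varphi(1)$ appearing formally in the interval bounds are never actually evaluated in your sums (the left inequality is summed only over $i \geq 2$ and the right only over $i \leq n$), and the endpoints themselves are Lebesgue-null, so the sandwich is legitimate as stated.
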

\begin{proof}
  Suppose first that $\varphi$ is nondecreasing, and for each $n \in \N$ define
  $\varphi_n(x) = \varphi\brace{i/(n+1)}$ for $i/(n+1) \leq x < (i+1)/(n+1)$,
  $i = 1, \dots, n$, and $\varphi_n(x) = 0$ for $x < 1/(n+1)$. Then
  $\abs{\varphi_n} \leq \abs{\varphi}$ for all $n$ by construction, since
  $\varphi$ is nonnegative and nondecreasing. Furthermore, since $\varphi$ is
  monotone, it is discontinuous at a countable number of points
  \citep[p. 344]{knapp_basic_2007}, so $\varphi_n(x) \to \varphi(x)$ pointwise
  almost everywhere. So the dominated convergence theorem implies
  \begin{align*}
    \lim_{n \to \infty} \frac{1}{n+1} \sum_{i=1}^n \varphi\pfrac{i}{n+1}
    = \lim_{n \to \infty} \int_0^1 \varphi_n(x) \d x
    = \int_0^1 \varphi(x) \d x.
  \end{align*}
  The conclusion follows since $(n+1)/n \to 1$ as $n \to \infty$. If $\varphi$
  is instead nonincreasing, apply the above argument to
  $x \mapsto \varphi(1-x)$.
\end{proof}

Before proving \cref{th:f_approximation} we require one more technical
lemma. For $\rho \in [1/2, 1)$, define
\begin{align}
  h_\rho(x) = e^x / \ebrace{1 + \rho(e^x - 1)}^2 \quad \text{on } x \geq 0.
\end{align}
\begin{lemma}\label{th:h_technical}
  For any $\rho \in [1/2, 1)$, $0 \leq h_\rho(x) \leq 1$ for all $x \geq 0$.
\end{lemma}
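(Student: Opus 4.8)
The plan is to dispose of the lower bound immediately and then reduce the upper bound $h(x) \le 1$ to an elementary one-variable inequality that I can settle by a monotonicity argument. For the lower bound, I would observe that when $x \ge 0$ we have $e^x \ge 1$, so $1 + \rho(e^x - 1) \ge 1 > 0$; hence the numerator and the squared denominator defining $h$ are both positive, and $h(x) \ge 0$ follows with no further work.

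For the upper bound, I would clear the denominator: since $1 + \rho(e^x - 1) > 0$ on $x \ge 0$, the inequality $h(x) \le 1$ is equivalent to $e^x \le (1 + \rho(e^x - 1))^2$. Substituting $u = e^x$, which ranges over $[1, \infty)$ as $x$ ranges over $[0, \infty)$, it then suffices to prove $u \le (1 + \rho(u - 1))^2$ for all $u \ge 1$. I would define $G(u) := (1 + \rho(u-1))^2 - u$, note that $G(1) = 0$, and argue that $G$ is nondecreasing on $[1, \infty)$, which gives $G(u) \ge G(1) = 0$ and hence the desired bound.

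The monotonicity of $G$ is the one place where the hypothesis $\rho \ge 1/2$ is used, and is essentially the only content of the lemma. Differentiating gives $G'(u) = 2\rho(1 + \rho(u-1)) - 1$, which is itself nondecreasing in $u$ (its derivative is $2\rho^2 > 0$), so $G'(u) \ge G'(1) = 2\rho - 1$ for all $u \ge 1$. Because $\rho \ge 1/2$, we have $G'(1) \ge 0$, so $G' \ge 0$ throughout $[1, \infty)$ and $G$ is nondecreasing, as needed. I do not expect any genuine obstacle: the whole statement is a short calculus exercise, and the only subtlety is to invoke $\rho \ge 1/2$ exactly at the evaluation $G'(1) = 2\rho - 1$, since for $\rho < 1/2$ one has $G(u) \approx (2\rho - 1)(u-1) < 0$ for $u$ slightly above $1$, and the bound $h \le 1$ would then fail for small $x > 0$.
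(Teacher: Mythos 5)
Your proof is correct, and it takes a somewhat different route from the paper's. The paper works with $h$ directly: by the quotient rule,
\[
h'(x) = e^x \,\frac{1 - \rho(1+e^x)}{[1+\rho(e^x-1)]^3},
\]
and since $e^x \ge 1$ for $x \ge 0$, the hypothesis $\rho \ge 1/2$ gives $1 - \rho(1+e^x) \le 1 - 2\rho \le 0$, so $h$ is nonincreasing on $[0,\infty)$ and $h(x) \le h(0) = 1$. You instead clear the (positive) denominator and substitute $u = e^x$, reducing the claim to the quadratic inequality $G(u) := (1+\rho(u-1))^2 - u \ge 0$ on $[1,\infty)$, which you settle via $G(1)=0$, $G'(1) = 2\rho - 1 \ge 0$, and $G''(u) = 2\rho^2 > 0$. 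The two arguments are structurally parallel---each anchors at the point where equality holds ($x=0$, i.e.\ $u=1$) and invokes $\rho \ge 1/2$ exactly once to fix the sign of a derivative---but they monotonize different objects: the paper shows the \emph{ratio} $e^x/[1+\rho(e^x-1)]^2$ is nonincreasing, while you show the \emph{difference} $(1+\rho(u-1))^2 - u$ is nondecreasing. Yours is marginally more elementary, trading the quotient-rule computation for the convexity of a quadratic. Your closing remark is also a genuine addition: by noting $G(u) \approx (2\rho-1)(u-1) < 0$ just above $u=1$ when $\rho < 1/2$, you show that the constant $1/2$ is sharp, i.e.\ the conclusion of the lemma actually fails for $\rho < 1/2$; the paper does not record this.
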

\begin{proof}
  That $h_\rho(x) \geq 0$ for all $x \geq 0$ is clear from the definition. To
  see that $h_\rho(x) \leq 1$, observe
  \begin{align*}
    h_\rho'(x) = e^x
      \ebracket{\frac{1 - \rho(1 + e^x)}{\brace{1 + \rho(e^x - 1)}^3}}.
  \end{align*}
  Now the inequality $e^x \geq 1 + x$ implies
  $1 - \rho (1 + e^x) \leq 1 - 2 \rho \leq 0$ by our assumption $\rho \geq 1/2$,
  while $1 + \rho(e^x - 1) \geq 1 > 0$. Hence $h_\rho'(x) \leq 0$ for all
  $x \geq 0$. Together with $h_\rho(0) = 1$, the conclusion follows.
\end{proof}

We are now ready to prove \cref{th:f_approximation}.
\begin{proof}[of \cref{th:f_approximation}]
  The limit $n^{-1} \mu_n(x) \to \rho_\Gamma \int_{1-x}^1 \varphi(y) \d y$
  follows directly from \cref{th:riemann} applied to the function
  $q \mapsto \varphi(1-q) \indicator{q \leq x}$. The bulk of the work is in
  proving that $f_{\alpha,n}(x) = \mu_n(x) + \Ocal(\sqrt{n})$. For this, we
  start with first-order application of Taylor's theorem about $\lambda = 0$,
  which yields, for any $c \geq 0$, $\lambda \geq 0$,
\begin{align}
  \log\ebrace{1 + \rho_\Gamma(e^{c \lambda} - 1)}
  &= \rho_\Gamma c \lambda +
    \frac{\rho_\Gamma (1 - \rho_\Gamma) h_{\rho_\Gamma}(\xi) c^2 \lambda^2}{2},
      \label{eq:taylor1}
\end{align}
for some $\xi \in [0, c \lambda]$. So combining the definition
\eqref{eq:uniform_bound} of $f_{\alpha,n}$ with the expansion
\eqref{eq:taylor1}, we have
\begin{align*}
  f_{\alpha,n}(x) &= \frac{\log \alpha^{-1}}{\lambda_n} + \mu_n(x)
       + \frac{\rho_\Gamma(1-\rho_\Gamma) \lambda_n}{2}
         \sum_{i=\ceil{(1-x)(n+1)}}^n h_{\rho_\Gamma}(\xi_i) c_i^2,
\end{align*}
where $\xi_i \in [0, c_i \lambda_n]$ for each $i = 1, \dots, n$. Since
$\Gamma \geq 1$, we are assured $\rho_\Gamma \in [1/2, 1)$, so
\cref{th:h_technical} implies
\begin{align*}
  0 \leq
  \frac{\rho_\Gamma(1-\rho_\Gamma) \lambda_n}{2}
    \sum_{i=\ceil{(1-x)(n+1)}}^n h_{\rho_\Gamma}(\xi_i) c_i^2
  \leq \frac{\lambda_n \sigma_n^2(x)}{2},
\end{align*}
so that
\begin{align*}
  0 \leq f_{\alpha,n}(x) - \mu_n(x) \leq
    \frac{\log \alpha^{-1}}{\lambda_n} + \frac{\lambda_n \sigma_n^2(x)}{2}.
\end{align*}
Applying \cref{th:riemann} to the function
$q \mapsto \varphi^2(1-q) \indicator{q \leq x}$, which is integrable by
Assumption 1, we see that $n^{-1} \sigma_n^2(x) = \Ocal(1)$ for each
$x \in (0,1)$. Together with the definition of $\lambda_n$ just below
\eqref{eq:uniform_bound}, we conclude
\begin{multline}
  0 \leq \frac{f_{\alpha,n}(x) - \mu_n(x)}{\sqrt{n}}
  = \frac{1}{\sqrt{n}} \eparen{
    \frac{\log \alpha^{-1}}{\lambda_n} + \frac{\lambda_n \sigma_n^2(x)}{2}
  }
  = \sqrt{\frac{\sigma_n^2(x_0)}{2n}} +
    \sqrt{\frac{2 n \log\alpha^{-1}}{\sigma_n^2(x_0)}}
    \cdot \frac{\sigma_n^2(x)}{n}
  = \Ocal(1), \label{eq:f_approx_conclusion}
\end{multline}
as desired.
\end{proof}

\subsection{Asymptotic approximation of $f_{\alpha,n}$}
\label{sec:f_asymp_proof}

The following result formalizes the second-order expansion mentioned in
\cref{sec:intro_test}; recall the definition of $g_{\alpha,n}$ from
\eqref{eq:g_fn}.
\begin{proposition}\label{th:g_asymp}
  If $\varphi$ satisfies Assumptions
  \ref{assmp:integrable}--\ref{assmp:nonzero}, and furthermore
  $\int_0^1 \varphi^3(x) \d x < \infty$, then for all $x \in (0,1)$, we have
  $f_{\alpha,n}(x) \leq g_{\alpha,n}(x)$ for all $n$ and
  $f_{\alpha,n}(x) = g_{\alpha,n}(x) + \Ocal(1)$ as $n \to \infty$
\end{proposition}
\begin{proof}
  We follow an analogous argument to the proof of \cref{th:f_approximation},
  starting from
\begin{align*}
  \log\ebrace{1 + \rho(e^{c \lambda} - 1)}
    &= \rho c \lambda + \frac{\rho (1 - \rho) c^2 \lambda^2}{2}
       - \frac{\rho (1 - \rho) h_2(\xi) c^3 \lambda^3}{6},
\end{align*}
for some $\xi \in [0, c \lambda]$, where
\begin{align*}
  h_2(x) &= \frac{e^x \brace{\rho(1 + e^x) - 1}}{\brace{1 + \rho(e^x - 1)}^3}
\end{align*}
satisfies $0 \leq h_2(x) \leq 1$ for all $x \geq 0$. By the same argument that
led from \eqref{eq:taylor1} to \eqref{eq:f_approx_conclusion}, we find
\begin{align}
  0 \leq \frac{\log \alpha^{-1}}{\lambda_n} + \mu_n(x)
    + \frac{\lambda_n \sigma_n^2(x)}{2} - f_{\alpha,n}(x)
  \leq \frac{\rho_\Gamma(1-\rho_\Gamma) \lambda_n^2}{6}
    \sum_{i=\ceil{(1-x)(n+1)}}^n c_i^3 = \Ocal(1).
  \label{eq:f_g_relation}
\end{align}
Substituting the choice
$\lambda_n = \left\{2 \log(\alpha^{-1}) / \sigma_n^2(x_0)\right\}^{1/2}$ shows
that
\begin{align}
  \frac{\log \alpha^{-1}}{\lambda_n} + \mu_n(x)
    + \frac{\lambda_n \sigma_n^2(x)}{2}
  = \mu_n(x)
    + \ebrace{1 + \frac{\sigma_n^2(x)}{\sigma_n^2(x_0)}}
    \sqrt{\frac{\sigma_n^2(x_0) \log \alpha^{-1}}{2}} = g_{\alpha,n}(x),
  \label{eq:g_lambda}
\end{align}
so that \eqref{eq:f_g_relation} reads
$0 \leq g_{\alpha,n}(x) - f_{\alpha,n}(x) = \Ocal(1)$ as desired.
\end{proof}

The chosen value of $\lambda_n$ is the minimizer of the left-hand side
of \eqref{eq:g_lambda} when $x = x_0$, justifying the claim that $\lambda_n$ is
chosen to optimize the bound $g_{\alpha,n}(x)$ at $x = x_0$.

\subsection{Proof of \cref{th:T_convergence}}

Let $H(x) = G(x) - G(-x)$. Fix any $\epsilon > 0$. Because bounded,
continuous functions with compact support are dense in $L^p$ \citep[Theorem
13.21]{hewitt_real_1965}, we can find a continuous function
$\varphi_c: [0,1] \to [0, \infty)$ such that
$\int_0^1 \eabs{\varphi(x) - \varphi_c(x)} \d x < \epsilon$, and
$\varphi_c(x) = 0$ for all $x \in [0,b) \union (1-b,1]$ for some $0 < b <
a$. Now write
\begin{align*}
  \tau = \int_0^\infty \varphi\brace{H(x)} \d G(x), \quad 
  \tau_c = \int_0^\infty \varphi_c\brace{H(x)} \d G(x).
\end{align*}
We will show the following three relations hold:
\begin{align}
 \text{pr}\left\{ \limsup_{n \to \infty} \frac{1}{n} \eabs{
      \sum_{i=1}^n \varphi\pfrac{i}{n+1} \indicator{Y_{(i)} > 0}
      - \sum_{i=1}^n \varphi_c\pfrac{i}{n+1} \indicator{Y_{(i)} > 0}}
    < \epsilon\right\} = 1,& \label{eq:T_conv_step1}\\
 \text{pr}\left\{ \lim_{n \to \infty} \frac{1}{n} \sum_{i=1}^n \varphi_c\pfrac{i}{n+1} \indicator{Y_{(i)} > 0}
  = \tau_c\right\} = 1,
    \label{eq:T_conv_step2} \\
  \abs{\tau_c - \tau} < \epsilon.& \label{eq:T_conv_step3}
\end{align}
From this we conclude
$\limsup_{n \to \infty} \eabs{n^{-1} \sum_{i=1}^n \varphi\pfrac{i}{n+1}
  \indicator{Y_{(i)} > 0} - \tau} < 2\epsilon$  with probability one.  Since $\epsilon$ was
arbitrary, the conclusion follows.

To obtain \eqref{eq:T_conv_step1}, use the triangle inequality to write
\begin{align*}
  \frac{1}{n} \eabs{
      \sum_{i=1}^n
      \ebrace{\varphi\pfrac{i}{n+1} - \sum_{i=1}^n \varphi_c\pfrac{i}{n+1}}
      \indicator{Y_{(i)} > 0}}
  &\leq \frac{1}{n} \sum_{i=1}^n \eabs{\varphi - \varphi_c}\pfrac{i}{n+1} \\
  &= \frac{1}{n}
       \sum_{i=1}^{n+1} \eabs{\varphi - \varphi_c}\pfrac{i}{n+1}
     - \frac{\eabs{\varphi - \varphi_c}(1)}{n} \\
  &\to \int_0^1 \eabs{\varphi - \varphi_c}(x) \d x < \epsilon,
\end{align*}
where the limit uses \cref{th:riemann}, noting that $\eabs{\varphi - \varphi_c}$
is bounded on $[b,1-b]$ and monotone elsewhere, and final inequality follows
from our choice of $\varphi_c$.

The second step \eqref{eq:T_conv_step2} follows from Theorem 1 of
\citet{sen_convergence_1970} applied to $\varphi_c$, which we partially
restate. See \cref{sec:remark_sen} below for an explanation of why our statement
differs from Sen's.
\begin{lemma}[\citealp{sen_convergence_1970}, Theorem 1]\label{th:sen}
  Suppose $\varphi_c \in L^1(0,1)$ is bounded and continuous, and suppose
  $Y_1, Y_2, \dots$ are drawn i.i.d.\ from a continuous distribution $G$. Then
  \begin{align*}
  \frac{1}{n} \sum_{i=1}^n \varphi_c\pfrac{i}{n+1} \indicator{Y_{(i)} \geq 0}
  \convas \int_0^\infty \varphi_c\brace{H(x)} \d G(x).
  \end{align*}
\end{lemma}

Finally, to see \eqref{eq:T_conv_step3}, use the triangle inequality to write
\begin{align*}
  \eabs{\tau_c - \tau}
    &\leq \int_0^\infty \eabs{\varphi_c - \varphi}\brace{H(y)} \d G(y) \\
    &\leq \int_0^\infty \eabs{\varphi_c - \varphi}\brace{H(y)} \d H(y),
\end{align*}
since $H'(y) = G'(y) + G'(-y) \geq G'(y)$ and the integrand is nonnegative. From
this we conclude
\begin{align*}
  \eabs{\tau_c - \tau} \leq \int_0^1 \eabs{\varphi_c - \varphi}(u) \d u
    < \epsilon,
\end{align*}
by our choice of $\varphi_c$. The proof is complete.

\subsection{Proof of \cref{th:limiting_ds}}

  Write $q_x$ for the $x$-quantile of $\abs{Y}$ when $Y \sim G$, so that $q_x$
  is defined by the equation $G(q_x) - G(-q_x) = x$. We shall require the
  derivative of $q_x$ below, which we find by implicit differentiation:
  \begin{align}
    \frac{\d q_x}{\d x} = \frac{1}{g(q_x) + g(-q_x)}.
    \label{eq:q_derivative}
  \end{align}
  Now observe that, using the
  definition of $q_x$, we may write $\pi(x)$ from Theorem 2 as
  \begin{align}
    \pi(x) = \frac{\int_{q_{1-x}}^\infty \varphi\brace{G(y) - G(-y)} \d G(y)}
                  {\int_{1-x}^1 \varphi(y) \d y}. \label{eq:pi_with_q}
  \end{align}
  We apply the generalized form of L'H\^opital's rule, which says that
  $\limsup f/g \geq \liminf f'/g'$ when $\lim f = \lim g = 0$, to the formula
  \eqref{eq:pi_with_q} for $\pi(x)$ to find
  \begin{align*}
    \limsup_{x \downarrow 0} \pi(x)
    &\geq \liminf_{x \downarrow 0}
       \frac{\frac{\d}{\d x}\int_{q_{1-x}}^\infty \varphi\brace{G(y) - G(-y)} \d G(y)}
            {\frac{\d}{\d x} \int_{1-x}^1 \varphi(y) \d y} \\
    &= \liminf_{x \downarrow 0}
      \frac{\varphi(1-x) g(q_{1-x})}{\varphi(1-x)}
      \cdot \frac{1}{g(q_{1-x}) + g(-q_{1-x})},
  \end{align*}
  where the equality uses the fundamental theorem of calculus and
  \eqref{eq:q_derivative}. Assumption 4 on $\varphi$ implies $\varphi(q)$ must
  be positive on a neighborhood $q \in (1-\epsilon, 1)$ for some $\epsilon > 0$,
  which ensures the limit is well-defined. Reparametrizing in terms of
  $q = q_{1-x}$, and noting that $q_{1-x} \uparrow \infty$ as $x \downarrow 0$
  since $g$ is positive throughout $\R$, we have
  \begin{align*}
    \limsup_{x \downarrow 0} \pi(x)
    &\geq \liminf_{q \uparrow \infty}
      \frac{1}{1 + \frac{g(-q)}{g(q)}}.
  \end{align*}
  Hence
  $\limsup_{x \downarrow 0} \frac{\pi(x)}{1 - \pi(x)} \geq \liminf_{q \uparrow
    \infty} \frac{g(q)}{g(-q)}$. The conclusion follows from
  \cref{th:unif_ds}.

\subsection{Proof of \cref{th:normal_scores_lp}}

Fix any $p \geq 1$. A standard Cram\'er-Chernoff tail bound for the normal
distribution \citep[Section 2.2]{boucheron_concentration_2013} gives
$1 - \Phi(x) \leq e^{-x^2/2}$, which implies
$\Phi^{-1}(q) \leq \sqrt{2 \log (1-q)^{-1}}$. Hence
\begin{align*}
  \int_0^1 \eabs{\varphi(q)}^p \d q
    &\leq 2^{p/2} \int_0^1 \ebracket{\log\brace{2/(1-q)}}^{p/2} \d q \\
    &= 2^{1+p/2} \int_{\log 2}^\infty y^{p/2} e^{-y} \d y
\end{align*}
using the substitution $y = \log\brace{2/(1-q)}$. The final integral is upper
bounded by $\Gamma(1+p/2)$, using the definition of the Gamma function and
non-negativity of the integrand, which completes the proof.

\subsection{A remark on Theorem 1 from \citet{sen_convergence_1970}}
\label{sec:remark_sen}

\Citet{sen_convergence_1970} assumes only that $\varphi \in L^1(0,1)$ is
continuous. Denoting $\varphi_n(x) = \varphi\brace{i/(n+1)}$ for
$(i-1)/n < x \leq i/n$, $i = 1, \dots, n$, their proof (p. 2141) claims that
\begin{align}
  \lim_{n \to \infty} \int_0^1 \abs{\varphi_n(x)} \d x
  = \int_0^1 \abs{\varphi(x)} \d x. \label{eq:sen_convergence}
\end{align}
The conclusion \eqref{eq:sen_convergence} is not true for all continuous
$\varphi \in L^1(0,1)$, as the counterexample below shows. However, noting that
$\int_0^1 \varphi_n(x) \d x = n^{-1} \sum_{i=1}^n \varphi\brace{i/(n+1)}$, our
\cref{th:riemann} shows that \eqref{eq:sen_convergence} is true under stronger
conditions, and in particular is true for bounded $\varphi$. This is the reason
we require boundedness in our restatement of Sen's result, \cref{th:sen}.

Let $\varphi(x) = n$ for $1/(n+1) \leq x \leq 1/(n+1) + 1/(n2^{n+1})$,
$n \in \N$. Then $\int_0^1 \varphi(x) \d x = \sum_{n=1}^\infty 2^{-n-1} = 1/2$,
hence $\varphi(x) \in L^1$. But
$n^{-1} \sum_{i=1}^n \varphi\brace{i/(n+1)} \geq n^{-1} \varphi\brace{1/(n+1)} =
1$ for all $n$, so
$\liminf_{n \to \infty} n^{-1} \sum_{i=1}^n \varphi\brace{i/(n+1)} \geq 1 > 1/2
= \int_0^1 \varphi(x) \d x$, showing \eqref{eq:sen_convergence} does not
hold. This $\varphi$ is not continuous, but may be replaced with a continuous
approximation by standard arguments.

\section{Simulations of Type I error rate}\label{sec:typeI}

\begin{figure}
  \includegraphics{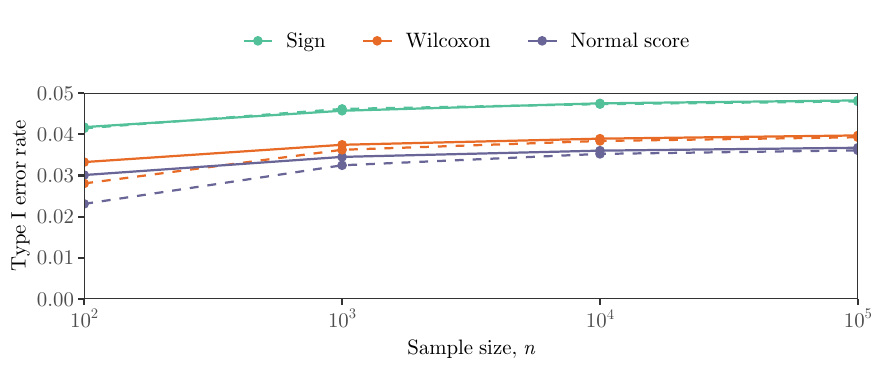}
  \caption{Simulated Type I error rate of uniform sign rank tests versus sample
    size, based on one million replications. Solid lines show $\Gamma = 1$ while
    dashed lines show $\Gamma = 5$. All tests use $x_0 = 1/3$ and nominal level
    $\alpha = 0.05$. \label{fig:typeI_size}}
\end{figure}

\begin{figure}
  \includegraphics{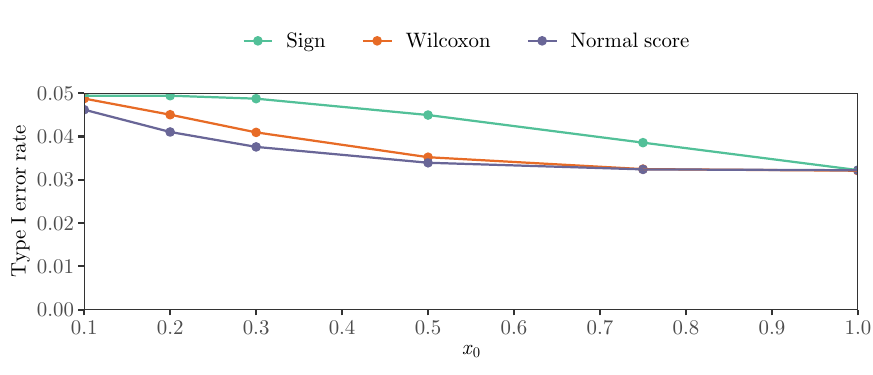}
  \caption{Simulated Type I error rate of uniform sign rank tests versus tuning
    parameter $x_0$ at sample size $n = 10^5$ when $\Gamma = 1$, based on one
    million replications. All tests use nominal level $\alpha =
    0.05$. \label{fig:typeI_x0}}
\end{figure}

\Cref{th:uniform_bound} shows that the uniform general signed rank test has size
no larger than the nominal level $\alpha$, while the simulations in
\cref{sec:simulations} illustrate the realized power of the test under various
alternatives. Neither tells us how close the true size of the test is to the
nominal level $\alpha$, that is, how close the inequality in
\cref{th:uniform_bound} is to an equality. Figures~\ref{fig:typeI_size} and
\ref{fig:typeI_x0} present simulation results under the null $H_0(\Gamma)$ to
illustrate the size of some uniform tests. These simulations use the worst-case
distribution within $H_0(\Gamma)$ defined by \cref{th:worst_case}. 

The simulations reveal several interesting patterns. Size appears to increase towards a limit as sample size grows, but this
  limit is not equal to $\alpha$ in general.
For larger $\Gamma$, the test becomes more conservative at small sample
  sizes, but the effect is negligible at large sample sizes.
As $x_0$ approaches zero, the asymptotic size approaches $\alpha$.
Finally, the sign score function yields the least conservative uniform test,
  followed by the Wilcoxon score function and then the normal score function.

\section{Additional plots and tables}\label{sec:addplots}

\begin{table}[ht]
\centering
\small
\begin{tabular}{r|rr|r}
\hline
& \multicolumn{2}{c|}{Average attribute values} & Standardized \\
Variable & 15+ fish servings / mo & 0-2 fish servings / mo & difference \\
\hline
Age & 43.73 & 43.63 & 0.005 \\
Household Income/(2x poverty line) & 2.99 & 2.96 & 0.017 \\
Female & 0.46 & 0.46 & 0.004 \\
Hispanic & 0.19 & 0.18 & 0.002 \\
Black & 0.22 & 0.22 & 0.001 \\
Smoker & 0.44 & 0.42 & 0.011 \\
Cigarettes/Day & 4.09 & 4.04 & 0.011 \\
High School Graduate & 0.80 & 0.80 & 0.000 \\
Missing HS Graduation Status & 0.03 & 0.03 & 0.000 \\
\hline
\end{tabular}
\caption{Balance table for 1,672 matched pairs formed from NHANES data.  Each
  pair contains one individual who consumed $\geq$15 servings of fish in the
  previous month, and one who consumed no more than two.  The first two columns
  give the sample means in the matched samples for various attributes of
  interest, and the third gives the standardized difference, which is computed
  by dividing the difference in group sample means by the pooled standard
  deviation estimate from the full dataset before matching. \label{tab:balance}}
\end{table}

\Cref{tab:balance} shows covariate balance between treated and control groups
for the constructed pairs used in the data example of Section 6.
\begin{figure}
  \includegraphics{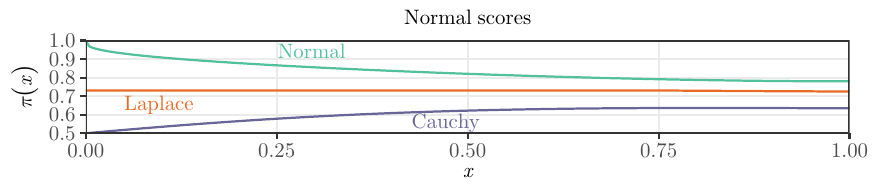}
  \caption{$\pi(x)$ from Theorem 2 for the normal score function when $G$ is
    standard normal, Laplace (double exponential) or Cauchy, and $\tau =
    1/2$. \label{fig:pi_plot_extra}}
\end{figure}
\Cref{fig:pi_plot_extra} plots $\pi(x)$ as defined in Theorem 2 for
additional score functions not included in Figure 2. See
Section 4 for discussion.

\begin{figure}
\includegraphics{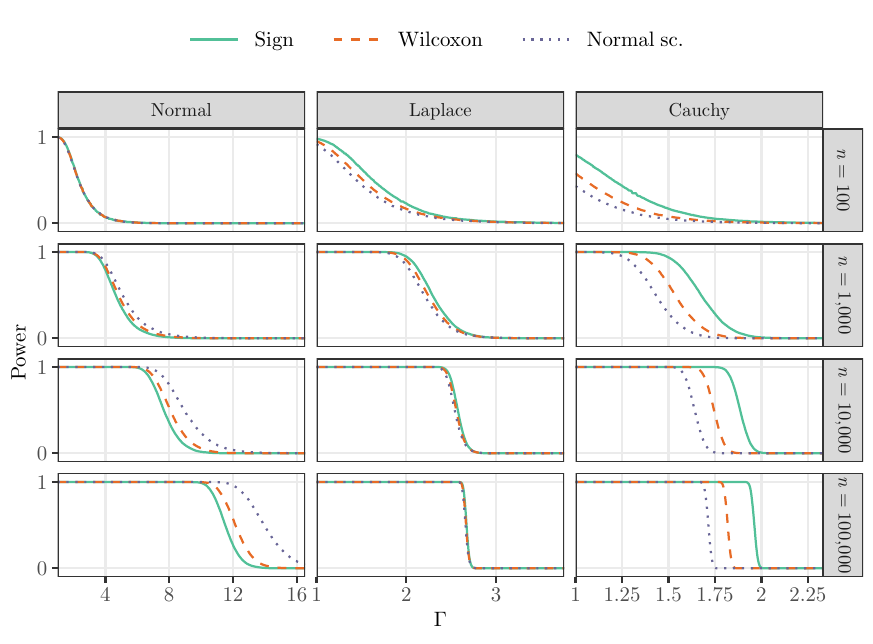}
\caption{Comparison of simulated power for uniform tests using different score
  functions, based on 10,000 replications under alternative model $H_1(G)$ with
  $G$ as indicated, having center $1/2$ and unit scale. All tests use
  $\alpha = 0.05$. \label{fig:compare_scores}}
\end{figure}

\Cref{fig:compare_scores} complements Figure 3 by comparing
power between uniform tests with different score functions. Tests tend to
perform similarly with small sample sizes, but clear distinctions emerge with
large sample sizes. In the normal case, the normal scores test dominates while
the redescending score function substantially underperforms. As we have seen,
under normal noise the outliers contain the most information, and a score
function that places more weight upon pairs with large absolute differences
will attain higher power as a result. Conversely, in the Cauchy case, the normal
scores tests performs the worst, while the sign test performs the best. Here the
extreme tails yield less information, as indicated by Figure 2. The
Laplace case is a middle ground in which the tails yield no more or less
information than most of the rest of the distribution, as we have seen in
Figure 2. Here the choice of score function makes little difference.

\end{document}